\tikzset{
  photon/.style={decorate, decoration={snake}, draw=black},
  fermion/.style={draw=black, postaction={decorate},decoration={markings,mark=at position .55 with {\arrow{>}}}},
  vertex/.style={draw,shape=circle,fill=black,minimum size=5pt,inner sep=0pt},
particle/.style={thick,draw=black},
particle2/.style={thick,draw=blue},
avector/.style={thick,draw=black, postaction={decorate},
    decoration={markings,mark=at position 1 with {\arrow[black]{triangle 45}}}},
gluon/.style={decorate, draw=black,
    decoration={coil,aspect=0}}
 }
\NewDocumentCommand\semiloop{O{black}mmmO{}O{above}}
{%
\draw[#1] let \p1 = ($(#3)-(#2)$) in (#3) arc (#4:({#4+180}):({0.5*veclen(\x1,\y1)})node[midway, #6] {#5};)
}
\tikzset{
  negate/.style={
    decoration={
      markings,
      mark= at position 0.5 with {
        \node[transform shape] (tempnode) {$/$};
      },
    },
    postaction={decorate},
  },
}
\newcommand{\ct}{\mathrm{count}}
\theoremstyle{plain}
\newtheorem{thm}{Theorem}[section]
\theoremstyle{definition}
\newtheorem{lem}[thm]{Lemma}
\newtheorem{rem}[thm]{Remark}
\newtheorem{ex}[thm]{Example}
\newtheorem{prop}[thm]{Proposition}
\newtheorem{cor}[thm]{Corollary}
\newtheorem{defn}[thm]{Definition}
\theoremstyle{remark}
\newcommand{\R}{\mathbb{R}}
\newcommand{\C}{\mathbb{C}}
\newcommand{\id}{\mathrm{id}}
\newcommand{\calH}{\mathcal{H}}
\newcommand{\calS}{\mathcal{S}}
\newcommand{\calC}{\mathcal{C}}
\newcommand{\calK}{\mathcal{K}}
\newcommand{\calG}{\mathcal{G}}
\def\gpd{\,\lower1pt\hbox{$\longrightarrow$}\hskip-.24in\raise2pt
               \hbox{$\longrightarrow$}\,}
\newcommand{\Z}{\mathbb{Z}}
\let\ul=\underline
\numberwithin{equation}{section}
\begin{document}
\makeatletter
\providecommand\@dotsep{5}
\def\listtodoname{List of Todos}
\def\listoftodos{\@starttoc{tdo}\listtodoname}
\makeatother

\title[Convolution algebras for Relational Groupoids and Reduction]{Convolution algebras for Relational Groupoids and Reduction}
\author[I. Contreras]{Ivan Contreras}
\author[N. Moshayedi]{Nima Moshayedi}
\author[K. Wernli]{Konstantin Wernli}
\address{Department of Mathematics and Statistics\\Amherst College\\31 Quadrangle Drive, Amherst MA 01002}
\email[I.~Contreras]{icontreraspalacios@amherst.edu}
\address{Institut f\"ur Mathematik\\ Universit\"at Z\"urich\\ 
Winterthurerstrasse 190
CH-8057 Z\"urich}
\email[N.~Moshayedi]{nima.moshayedi@math.uzh.ch}
\address{Department of Mathematics\\University of Notre Dame\\ 225 Hurley, Notre Dame IN}
\email[K.~Wernli]{kwernli@nd.edu}

\subjclass[2010]{
18B10, 
18B40, 
18D35, 
20L05, 
57R56
} 
\keywords{convolution algebra, Lie groupoids, Haar systems, relational groupoids, reduction}
\maketitle

\begin{abstract} We introduce the notions of relational groupoids and relational convolution algebras. We provide various examples arising from the group algebra of a group $G$ and a given normal subgroup $H$. We also give conditions for the existence of a Haar system of measures on a relational groupoid compatible with the convolution, and we prove a reduction theorem that recovers the usual convolution of a Lie groupoid.

\end{abstract}
\tableofcontents

\section{Introduction}
\subsection{Motivation}
Symplectic groupoids are fundamental objects in Poisson geometry. Every symplectic groupoid $G \rightrightarrows M $ induces a Poisson structure on $M$ \cite{Coste}. Such Poisson manifolds are called \emph{integrable}, and $G$ is also called a \emph{symplectic realization} of $M$. It is a well-known fact that not all Poisson manifolds are integrable and that there are explicit obstructions to the integration \cite{CrainicFernandes}. However, one can associate to every Poisson manifold $M$ a \emph{relational symplectic groupoid} \cite{Relational, ContrerasThesis} which is
an infinite-dimensional symplectic manifold, 
equipped with Lagrangian submanifolds
that model the structure maps of a symplectic groupoid. 
In \cite{Hawkins} Hawkins showed that a Poisson manifold can be quantized\footnote{Loosely speaking, a ``quantization'' of a Poisson manifold is a non-commutative deformation of the algebra of functions, or a Lie subalgebra of it, subject to a subset of certain axioms put forward by Dirac \cite{Dirac}.}
via a twisted polarized convolution $C^*$-algebra of a symplectic groupoid integrating that Poisson manifold. 

The first main idea behind this paper is to generalize Hawkins' approach to arbitrary Poisson manifolds (integrable or not) by generalizing this construction to relational symplectic groupoids. 
In order to achieve this objective, we introduce the notion of \emph{relational groupoids} and their corresponding \emph{relational convolution algebras}, which is an analogue of the convolution algebra in the partial category $\mathbf{Rel}$ of sets and relations. 

We study various examples of relational convolution algebras that arise from extending Haar systems of measures to relational groupoids, and we prove our main result: a \emph{reduction theorem for relational convolution algebras}, which recovers the usual groupoid convolution algebra. This is also the first step towards proving the ``quantization commutes with reduction" conjecture by Guillemin and Sternberg \cite{GS} in the 
setting of groupoid quantization. 

In particular, this result serves as the first step towards reduction of its quantization (convolution algebras for relational groupoids). The next step is to construct the polarized algebra for relational symplectic groupoids. In addition to this, we hope to use relational convolution algebras to recover the C*-algebra quantization of Poisson pencils via reduction, recovering the results obtained in \cite{Bonechi} regarding the Bohr-Sommerfeld groupoid.

The second main idea behind this paper is that the relational symplectic groupoids could be used to study the relation between groupoid quantization and deformation quantization in a field-theoretic way, as follows.
Relational symplectic groupoids were introduced in \cite{Relational, ContrerasThesis} in order to describe the groupoid structure of
the phase space of a 2-dimensional topological field theory,
the \emph{Poisson Sigma Model} (PSM) \cite{Ikeda,SchallerStrobl,CF3}, \emph{before} gauge reduction \cite{CF2}.
In \cite{CF1}, Cattaneo and Felder have shown that the perturbative quantization of the PSM using the Batalin--Vilkovisky (BV) formalism \cite{BV1,BV2,BV3} yields Kontsevich's star product \cite{Kontsevich}, a deformation quantization associated to any Poisson manifold. 

It was shown by Cattaneo, Mnev and Reshetikhin in \cite{CMR2}, that the BV formalism can be extended to deal with the perturbative quantization of gauge theories on manifolds with boundary by coupling the Lagrangian approach of the Batalin--Vilkovisky construction \cite{BV1,BV2,BV3} in the bulk to the Hamiltonian approach of the Batalin--Fradkin--Vilkovisky construction \cite{BF1,FV1,Sta} on the boundary. This is known today as the \emph{BV-BFV formalism} \cite{CattMosh}.
Recently \cite{RSG_Quantization}, this formalism has been applied to the relational symplectic groupoid for constant Poisson structures, linking the BV-BFV perturbative quantization of the relational symplectic groupoid and Kontsevich's star product in this case by methods of cutting and gluing for Lagrangian evolution relations. 


These constructions have been partially extended to a wider class of Poisson structures and source manifolds in \cite{CMW2} and 
more general AKSZ theories \cite{AKSZ} in \cite{CMW}. We expect these results to be generalized to yield a BV-BFV description of a \emph{global} deformation quantization for general Poisson manifolds, not necessarily Kontsevich's star product, which might produce some interesting algebraic structures. 

A clear and explicit connection between geometric quantization (in terms of $C^{*}$-algebras) of the reduced phase space and deformation quantization of Poisson manifolds, via the PSM \cite{CF1}, remains an open question.
By constructing relational convolution algebras for the PSM, we hope to connect Kontsevich's and Hawkins' approaches via BV-BFV quantization of the relational symplectic groupoid in the future.
Eventually, these techniques might also help to generalize Kontsevich's star product to higher genera. In particular, the convolution algebra of a relational symplectic groupoid is the first step towards prescribing a field-theoretic interpretation of the $C^*$-algebra quantization of Poisson manifolds, in terms of the non perturbative PSM \cite{PSM_GQ}. 

Another motivation to this paper is the connection between groupoids and Frobenius objects in a dagger monoidal category. For instance, a representative example of a relational convolution algebra is the relational group algebra, a version \emph{up to equivalence}, of the group algebra of a group $G$. Group algebras are particular cases of Frobenius algebras, so relational convolution algebras provide a new class of examples of Frobenius objects in the category of sets and relations, which are also in correspondence with groupoids \cite{Frobenius1, Frobenius2}. In a work in preparation \cite{MehtaKeller}, we study Frobenius objects arising from groupoids in the category of spans, via simplicial sets. 

\subsection{Notation and conventions}
We will denote groups or groupoids by usual letters $G,H,K$ and relational groups or relational groupoids by calligraphic letters $\calG,\calH,\calK$.
Moreover, we will use Greek letters to denote elements in the set of morphisms of a groupoid. Latin letters $g,h,k$ (or $g_1,g_2,\ldots$) will be used for elements of a relational groupoid $(\mathcal{G},L,I)$. We will put an underline for a (relational) group(oid) to express the corresponding space observed after reduction. 
Underlined Latin letters $\underline{g},\underline{h},\underline{k}$ (or $\underline{g}_1,\underline{g}_2,\ldots$) will denote that the given object is obtained by reduction. 
A slashed arrow between two sets
$A \nrightarrow B$ denotes a relation from $A$ to $B$, i.e. a subset of $A\times B$. In this manuscript we treat relations as subsets of the Cartesian product, and the domain and codomain of the relation are prescribed in case it is ambiguous. Latin letters $x,y,z$ (or $x_1,x_2,\ldots$) will denote elements in the space of objects (base) of a (relational) groupoid. Functions will be denoted by $f_1,f_2,\ldots$ to avoid confusion with elements of a (relational) group(oid). 

\subsection*{Acknowledgements} We would like to thank Iakovos Androulidakis, Alberto Cattaneo, Eli Hawkins, Rajan Mehta and Michele Schiavina for useful discussions and comments on the manuscript. We also thank the anonymous referee for their useful and detailed comments.  N. M. was supported by the NCCR SwissMAP, funded by the Swiss National Science
Foundation, and by the SNF grant No. 200020\_192080. N. M. would like to thank the University of Illinois Urbana-Champaign for hospitality where this work started.
K. W. would like to thank the University of Z\"urich where a part of this work was
written, and acknowledges partial support of NCCR SwissMAP, funded by the Swiss National Science Foundation, and by the COST Action MP1405 QSPACE, supported by COST (European Cooperation in Science and Technology), and the SNF grant No. 200020 172498/1 during his affilitation with the University of Zurich. K. W. acknowledges further support from a BMS Dirichlet postdoctoral fellowship and the SNF Postdoc.Mobility grant P2ZHP2\_184083, and would like to thank the Humboldt-Universit\"at Berlin, in particular the group of Dirk Kreimer, and the University of Notre Dame for their hospitality.

\section{Background material}
\label{sec:Background_material}
\subsection{Groupoids}
\label{subsec:Groupoids}
Recall that a \emph{groupoid} is a small category whose morphisms are invertible. We denote a groupoid by $G\rightrightarrows M$, endowed with source map $s\colon G\to M$ and target map $t\colon G\to M$, where $G$ is the set of morphisms and $M$ the set of objects. We denote by $G^{(k)} \subset G^{\times k}$ the subset of $k$-composable morphisms, that is 
\begin{align}
\begin{split}
    G^{(k)}&=\{(\alpha_1,\ldots,\alpha_k)\in G^{\times k}\mid t(\alpha_{i+1}) = s(\alpha_{i}), i=1,\ldots,k-1\}\\
    &=\underbrace{G\times_{(s,t)}\times\dotsm\times_{(s,t)}G}_{k}.
\end{split}
\end{align}
We denote by $m\colon G^{(k)} \to G$ the multiplication (composition of morphisms).  
\begin{defn}[Lie groupoid]
A \emph{Lie groupoid} is a groupoid where $M$ and $G$ are smooth manifolds and all structure maps are smooth. The source and target maps are surjective submersions, which guarantees that the spaces of $k$-composable morphisms are smooth manifolds.
\end{defn}
A particular item of interest are Lie groupoids with a symplectic structure \cite{Weinstein2}. 
\begin{defn}[Symplectic groupoid]
\label{Symplectic groupoid}
A \emph{symplectic groupoid} is a Lie groupoid $G\rightrightarrows M$, where the space of morphisms is endowed with a symplectic form $\omega\in \Omega^2(G)$ such that the graph of the multiplication $m\colon G\times G\to G$ is a Lagrangian submanifold of $(G,\omega)\times (G,\omega)\times (G,-\omega)$.
\end{defn}
The definition above is equivalent to saying that the symplectic form $\omega$ is multiplicative, i.e.
\begin{equation}
m^*(\omega)=\pi_1^*(\omega)+\pi_2^*(\omega),    
\end{equation}

where $\pi_1$ and $\pi_2$ are projections of $G^{(2)} = G \times_{(s,t)}G$ onto its first and second component, respectively.
Definition \ref{Symplectic groupoid} is restrictive, e.g. one can show that there are no \emph{symplectic groups}. Furthermore, the following theorem holds \cite{Weinstein2}:
\begin{thm} Let $(G, \omega)\rightrightarrows M$ be a symplectic groupoid. Then 
\begin{enumerate}[$(i)$]
    \item There is a unique Poisson structure $\Pi$ on $M$ such that the source map $s$ is a Poisson map.
    \item If $\varepsilon$ denotes the unit map, then $\varepsilon(M)$ is a Lagrangian submanifold of $G$. 
    \item If $\iota$ denotes the inverse map, then the graph of $\iota$ is a Lagrangian submanifold of $G\times G$.
\end{enumerate}    
\end{thm}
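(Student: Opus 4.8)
The plan is to prove the three statements in sequence, extracting each from the defining Lagrangian condition on the graph of multiplication. First I would set up notation: write $\Gamma_m \subset G\times G\times \bar G$ for the graph of $m$, which by Definition \ref{Symplectic groupoid} is Lagrangian, and note the equivalent multiplicativity equation $m^*\omega = \pi_1^*\omega + \pi_2^*\omega$ on $G^{(2)}$. The key dimensional observation is that $\dim G^{(2)} = \dim G + \dim M$ (since $\pi_2$ is a surjective submersion onto $G$ with fibers of dimension $\dim M$), so $\dim G$ must be even, say $2n$, and $\dim M = n$; this is the first thing I would record, since it makes $M$ the right dimension to be Lagrangian.

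\textbf{Part $(ii)$: $\varepsilon(M)$ is Lagrangian.} Here I would argue that $\varepsilon(M)$ is isotropic and then count dimensions. For isotropy: using $m\circ(\varepsilon\circ t, \id) = \id$ and $m\circ(\id,\varepsilon\circ s) = \id$ together with multiplicativity, one shows $\varepsilon^*\omega = 0$. Concretely, restricting the multiplicativity equation to the submanifold of $G^{(2)}$ of the form $(\varepsilon(x),\varepsilon(x))$ and using that $m$ restricted there is $\varepsilon$ itself, one gets $\varepsilon^*\omega = \varepsilon^*\omega + \varepsilon^*\omega$, hence $\varepsilon^*\omega = 0$, so $\varepsilon(M)$ is isotropic. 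Since $\dim \varepsilon(M) = \dim M = n = \tfrac12 \dim G$, it is Lagrangian.

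\textbf{Part $(iii)$: the graph of $\iota$ is Lagrangian in $G\times \bar G$ (or $G\times G$ with the appropriate sign).} The strategy is the same: show it is isotropic and half-dimensional. For isotropy, I would apply the Lagrangian/multiplicativity condition to triples $(\alpha,\iota(\alpha))$ in $G^{(2)}$, where $m(\alpha,\iota(\alpha)) = \varepsilon(t(\alpha))$. Pulling back the multiplicativity relation along $\alpha \mapsto (\alpha, \iota(\alpha))$ gives $(\id,\iota)^*(\pi_1^*\omega + \pi_2^*\omega) = (\varepsilon\circ t)^*\omega$, i.e. $\omega + \iota^*\omega = t^*\varepsilon^*\omega = 0$ by Part $(ii)$. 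Hence $\iota^*\omega = -\omega$, which is exactly the statement that $\mathrm{graph}(\iota) = \{(\alpha,\iota(\alpha))\}$ is isotropic in $(G,\omega)\times(G,-\omega)$; since it is a copy of $G$ it has dimension $2n = \tfrac12\dim(G\times G)$, so it is Lagrangian. (If one instead wants it Lagrangian in $G\times G$ with form $\omega\times\omega$, note $\iota^*\omega=-\omega$ says the graph is Lagrangian there too, since the sign is absorbed.)

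\textbf{Part $(i)$: existence and uniqueness of the Poisson structure.} This is where I expect the real work, and I would treat it last. The idea: $s\colon G\to M$ is a surjective submersion, so at each $x\in M$ one can try to define $\{f,g\}(x)$ by lifting $f,g$ to $s^*f, s^*g$ near $\varepsilon(x)$ and using the symplectic Poisson bracket $\{s^*f,s^*g\}_\omega$, then showing it descends. The two things to check are (a) well-definedness: $\{s^*f,s^*g\}_\omega$ is constant along $s$-fibers, equivalently its Hamiltonian vector field is tangent to... actually the cleaner route is to show that $s^*f$ and $t^*g$ Poisson-commute for all $f,g$ (because $s$ and $t$ have ``symplectically orthogonal'' fibers along units, a consequence of multiplicativity and $\varepsilon(M)$ Lagrangian), and then that $s^*\{f,g\}_M := \{s^*f, s^*g\}_\omega$ is again an $s$-pullback. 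Uniqueness is immediate once we demand $s$ be Poisson, since $s$ is a submersion: $s^*$ is injective on functions, so the bracket on $M$ is forced. The Jacobi identity on $M$ follows from the Jacobi identity on $(G,\omega)$. The main obstacle is the careful local analysis showing $\ker ds$ and $\ker dt$ are symplectic-orthogonal complements along $\varepsilon(M)$ and that this forces $\{s^*f,s^*g\}_\omega$ to be basic; I would do this by choosing, near a unit, Darboux-like coordinates adapted to the source fibration, or alternatively cite that this is the content of Coste--Dazord--Weinstein \cite{Weinstein2} and only sketch it, since the theorem is explicitly attributed to that reference.
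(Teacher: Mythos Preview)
The paper does not prove this theorem: it is stated as a known result with a reference to \cite{Weinstein2} (Coste--Dazord--Weinstein), and no argument is given. So there is nothing in the paper to compare against, and your instinct at the end of Part~$(i)$ --- that one could simply cite the original source --- is exactly what the authors do.

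That said, your outline is the standard one and is essentially correct, with one small slip. In the dimension count you write that the fibers of $\pi_2\colon G^{(2)}\to G$ have dimension $\dim M$; in fact the fiber over $\alpha$ is the source fiber $s^{-1}(t(\alpha))$, of dimension $\dim G - \dim M$, so $\dim G^{(2)} = 2\dim G - \dim M$. Equating this with $\tfrac32\dim G$ (the Lagrangian dimension of $\Gamma_m$ in $G^3$) then gives $\dim M = \tfrac12\dim G$ as you want. Your formula $\dim G^{(2)} = \dim G + \dim M$ happens to be true \emph{a posteriori}, but using it to derive $\dim G = 2\dim M$ is circular. The arguments for $(ii)$ and $(iii)$ via multiplicativity are clean and correct, and your sketch for $(i)$ --- showing $\ker ds$ and $\ker dt$ are symplectic orthogonals along units so that $\{s^*f,s^*g\}_\omega$ is $s$-basic, with uniqueness forced by $s^*$ being injective --- is the right shape; it is precisely the content of the cited reference.
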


\subsection{Groupoid convolution algebras}
There are several equivalent ways to define a convolution algebra on groupoids. They differ on the choice of the spaces in which the measures are defined. 
We first recall the construction of a Haar system on source fibers \cite{Connes,Higson}, and then we describe an equivalent system of measures on $(s\times t)$-fibers. The latter is more suitable for the generalization to relational groupoids. In the sequel $C_c(G)$ denotes the space of continuous functions on $G$ with compact support. 
\begin{defn}[Right Haar system on $s$-fibers]\label{Haar sfibers}
A \emph{right Haar system} on a Lie groupoid $G\rightrightarrows M$ is a smooth family of smooth measures $(\mu_x)_{x\in M}$ on the source fibers $G_x:=s^{-1}(x)$ such that 
\begin{enumerate}[$(i)$]
    \item For all $f \in C_c(G)$, then $s_*f(x) = \int_{G_x}fd\mu_x$ defines a smooth function $s_*f \in C_c(M)$.
    \item For $\gamma \colon x \to y$, the right-multiplication diffeomorphism $R_\gamma\colon G_y \to G_x$ is measure-preserving: 
    \begin{equation}
        (R_\gamma)_*\mu_y = \mu_x.
    \end{equation}
\end{enumerate}
\end{defn}

\begin{defn}[Groupoid convolution algebra]
Let $G\rightrightarrows M$ be a Lie groupoid with a right Haar system $(\mu_x)_{x\in M}$. Then its \emph{groupoid convolution algebra} is $(C_c(G,\C),\star)$, continuous functions with compact support on $G$ with values in $\mathbb C$, equipped with the groupoid convolution product \[\star\colon C_c(G,\C)\times C_c(G,\C)\to C_c(G,\C)\] defined by
\begin{equation}
\label{eq:convolution_product}
    (f_1\star f_2)(\gamma) = \int_{G_{s(\gamma)}}f_1(\gamma \circ \eta^{-1})f_2(\eta)d\mu_{s(\gamma)}(\eta).
\end{equation}
\end{defn}
\begin{prop}
The convolution product $\star$, defined as in \eqref{eq:convolution_product}, is associative.
\end{prop}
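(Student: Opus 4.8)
The plan is to reduce the identity $(f_1\star f_2)\star f_3 = f_1\star(f_2\star f_3)$ to a single change of variables governed by the right-invariance of the Haar system, Definition~\ref{Haar sfibers}\,(ii), together with Fubini's theorem. Fix $\gamma\in G$ and write $x:=s(\gamma)$. First I would unfold the left-hand side using \eqref{eq:convolution_product}: we have $((f_1\star f_2)\star f_3)(\gamma)=\int_{G_x}(f_1\star f_2)(\gamma\circ\eta^{-1})f_3(\eta)\,\dd\mu_x(\eta)$, and since $s(\gamma\circ\eta^{-1})=t(\eta)$ for $\eta\in G_x$, expanding $f_1\star f_2$ once more produces the iterated integral
\[
((f_1\star f_2)\star f_3)(\gamma)=\int_{G_x}\int_{G_{t(\eta)}} f_1\big(\gamma\circ\eta^{-1}\circ\zeta^{-1}\big)\,f_2(\zeta)\,f_3(\eta)\,\dd\mu_{t(\eta)}(\zeta)\,\dd\mu_x(\eta).
\]
Unfolding the right-hand side in the same way, and using $s(\xi)=x$ for $\xi\in G_x$, gives
\[
(f_1\star(f_2\star f_3))(\gamma)=\int_{G_x}\int_{G_x} f_1\big(\gamma\circ\xi^{-1}\big)\,f_2\big(\xi\circ\eta^{-1}\big)\,f_3(\eta)\,\dd\mu_x(\eta)\,\dd\mu_x(\xi).
\]

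The core step is a substitution in the inner integral of the first expression. For fixed $\eta\in G_x$, right multiplication $R_\eta\colon G_{t(\eta)}\to G_{s(\eta)}=G_x$, $\zeta\mapsto\zeta\circ\eta$, is a diffeomorphism (with inverse $R_{\eta^{-1}}$), and by Definition~\ref{Haar sfibers}\,(ii) it satisfies $(R_\eta)_*\mu_{t(\eta)}=\mu_x$. Setting $\xi=\zeta\circ\eta$, i.e. $\zeta=\xi\circ\eta^{-1}$ and $\zeta^{-1}=\eta\circ\xi^{-1}$, associativity of groupoid multiplication together with $\eta^{-1}\circ\eta=\id_x$ collapses the argument of $f_1$ to $\gamma\circ\eta^{-1}\circ\zeta^{-1}=\gamma\circ\eta^{-1}\circ\eta\circ\xi^{-1}=\gamma\circ\xi^{-1}$, while $f_2(\zeta)=f_2(\xi\circ\eta^{-1})$. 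Hence the inner integral equals $\int_{G_x} f_1(\gamma\circ\xi^{-1})f_2(\xi\circ\eta^{-1})f_3(\eta)\,\dd\mu_x(\xi)$, so that
\[
((f_1\star f_2)\star f_3)(\gamma)=\int_{G_x}\int_{G_x} f_1\big(\gamma\circ\xi^{-1}\big)f_2\big(\xi\circ\eta^{-1}\big)f_3(\eta)\,\dd\mu_x(\xi)\,\dd\mu_x(\eta).
\]
Since $f_1,f_2,f_3$ are continuous with compact support, the integrand here is a continuous compactly supported function on $G_x\times G_x$, so Fubini's theorem permits exchanging the order of integration; the resulting inner integral over $\eta$ is precisely $(f_2\star f_3)(\xi)$ (using $s(\xi)=x$), which reproduces the unfolded right-hand side and proves associativity.

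As routine preliminaries one needs that $f_1\star f_2\in C_c(G)$, so that the outer integral in the left-hand side even makes sense; this follows from part~(i) of the Haar system axioms and the standard compact-support bookkeeping, and the use of Fubini above is immediate from compact support. I expect the only genuine hazard to be the substitution step: keeping track of which source fiber each of $\eta,\zeta,\xi$ lives in, and making sure the invariance axiom is applied to $R_\eta$ on the fiber $G_{t(\eta)}$ rather than to some other translation. Once the change of variables $\xi=\zeta\circ\eta$ is set up correctly, the groupoid identities $\eta^{-1}\circ\eta=\id_x$ and $\eta\circ\eta^{-1}=\id_{t(\eta)}$ reduce the arguments of $f_1$ and $f_2$ to the required form with no further work.
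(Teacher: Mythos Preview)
Your proof is correct and follows essentially the same route as the paper: expand both iterated convolutions, apply the right-invariance of the Haar system via the substitution $\xi=\zeta\circ\eta$ (the paper writes $\tau=\beta\circ\eta$), and then identify the two resulting double integrals over $G_x\times G_x$ by swapping the order of integration. If anything, your version is more careful than the paper's about tracking the source fibers and about invoking Fubini explicitly where the paper just says ``upon exchanging $\eta$ and $\tau$.''
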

\begin{proof}
Let $f_1,f_2,f_3\in C_c(G,\mathbb C)$ and consider, on the one hand, 
   \begin{align*}
   ((f_1\star f_2)\star f_3)(\gamma) &= \int_{G_{s(\gamma)}}(f_1\star f_2)(\gamma \circ \eta^{-1})f_3(\eta)d\mu_{s(\gamma)}(\eta) \\
  &= \int_{G_{s(\gamma)}}\int_{G_{s(\gamma\circ\eta^{-1})}}f_1(\gamma \circ \eta^{-1}\circ\beta^{-1})f_2(\beta)\; d\mu_{s(\gamma)}(\beta)\; f_3(\eta)d\mu_{s(\gamma)}(\eta)
\end{align*}
Now set $\tau = \beta \circ \eta$. We have $R_\eta\colon G_{s(\gamma \circ \eta^{-1})} \to G_{s(\gamma)}$, and using right-invariance of the measure, it follows that the above expression equals 
\begin{align}
    &\int_{G_{s(\gamma)}}\int_{G_{s(\gamma)}}f_1(\gamma \circ \eta^{-1}\circ(\tau \circ \eta^{-1})^{-1})f_2(\tau \circ \eta^{-1})\; d\mu_{s(\gamma \circ \eta^{-1})}(\beta)\; f_3(\eta)d\mu_{s(\gamma)}(\eta) \notag \\
    &=   \int_{G_{s(\gamma)}}\int_{G_{s(\gamma)}}f_1(\gamma \circ \tau^{-1})f_2(\tau \circ \eta^{-1})\; d\mu_{s(\gamma)}(\tau)\; f_3(\eta)d\mu_{s(\gamma)}(\eta) \label{eq:associativeI}
\end{align}
On the other hand, 
\begin{align*}
(f_1\star (f_2\star f_3))(\gamma) &= \int_{G_{s(\gamma)}}f_1(\gamma \circ \eta^{-1})(f_2\star f_3)(\eta) \; d\mu_{s(\gamma)}(\eta) \\
  &= \int_{G_{s(\gamma)}}\int_{G_{s(\eta)}}f_1(\gamma \circ \eta^{-1})f_2(\eta \circ \tau^{-1})f_3(\tau)d\mu_{s(\eta)}(\tau)\;d\mu_{s(\gamma)}(\eta)
\end{align*}
This expression equals \eqref{eq:associativeI} upon exchanging $\eta$ and $\tau$. 
\end{proof}
The following equivalent definition of groupoid Haar system can be found in \cite{Westman}.
\begin{defn}[Haar systems on $(s,t)$-fibers] Let $G_{(x,y)}=\{g \in G\mid s(g)=x, t(g)=y\}$.
A \emph{Haar system} on these fibers is defined similarly as in Definition \ref{Haar sfibers}, with the requirement that, for $f\in C_c(G_{(x,y)},\mathbb C)$, if 
\[\mu_{xy}(f)=\int_{G} f(g_{x,y})d\mu(g_{x,y}),\]
the function 
\begin{eqnarray*}
\mu(f):M \times M&\to& \mathbb C\\
(x,y)&\mapsto& \mu_{x,y}(f\vert_{G_{x,y}})
\end{eqnarray*}
is in $C_c(M\times M)$, whenever $f\in C_c(G)$. 
\end{defn}
\begin{ex}[Action groupoid]
Let $G$ be a locally compact group acting continuously on a locally compact Hausdorff space $X$, then $G\times X$   (as an \emph{action groupoid}) admits a
(right) Haar system $\{\delta_x, \mu \}$ where $\mu$ is a Haar measure on $G$ and $\delta_x$ is the Dirac measure at $x\in X$.
\end{ex}
\begin{rem}
The groupoid convolution algebra has an involutive $*$-operation given by
\begin{equation}
    f^*(\gamma) = \overline{f(\gamma^{-1})}. 
\end{equation}
In order to obtain groupoid $C^{*}$-algebras we need to use completion with respect to a certain norm and a given convolution algebra representation.  
\end{rem}
\begin{defn}[Left regular representation]
The \emph{left regular representation} of the groupoid convolution algebra is a map, for all $x\in M$,
\begin{eqnarray*}
\lambda_x: C_c(G) \to \mathcal B(L^2(G_x))
\end{eqnarray*}
which for $f \in C_c(G), h\in L^2(G_x)$ and $\gamma \in G_x$ is given by
\[(\lambda_x(f)h)(\gamma)= (f\star h) (\gamma)=\int_{G_s(\gamma)} f(\gamma \circ \eta^{-1})h(\eta)d\mu_{s(\gamma)}(\eta).\]
\end{defn}

\begin{defn}[Reduced groupoid $C^*$-algebra]
The \emph{reduced} groupoid $C^{*}$-algebra of $G$ is the completion of a groupoid convolution algebra $C_c(G)$ with respect to the norm 
\[\vert\vert f \vert\vert=\sup_{x}\vert \vert \lambda_x(f) \vert \vert_{\mathcal B(L^2(G_x))}.\]
\end{defn}



\section{Relational groupoids} 
\label{sec:Relational_groupoids}
\subsection{The category of relational groupoids}

\begin{defn}[Relational groupoid]
\label{defn:relational_groupoid}
A \emph{relational groupoid} is a triple $(\mathcal{G},L,I)$ such that
\begin{enumerate}
    \item $\mathcal G$ is a set. 
    \item $L$ is a subset of $\mathcal G\times \mathcal G \times \mathcal G$
    \item $I: \mathcal G \to \mathcal G$ is a function,
\end{enumerate}
satisfying the following axioms:
\begin{itemize}
 \item \textbf{\underline{A.1}} $L$ is cyclically symmetric, i.e. if $(g,h,k) \in L$, then $(h,k,g) \in L$.

\item \textbf{\underline{A.2}} $I$ is an involution (i.e. $I^2=\id$).\\

\item \textbf{\underline{A.3}}
Let $T$ denote the transposition map:
\begin{eqnarray*} 
T\colon  \mathcal G \times \mathcal G &\to& \mathcal G \times \mathcal G\\
(g,h) &\mapsto& (h,g).
\end{eqnarray*} Then
\begin{equation}
\label{eq:A.3}
I \circ L= L \circ T\circ (I \times I). 
\end{equation}


\item \textbf{\underline{A.4}} 
Let \[L_3:= I \circ L\colon  \mathcal G \times \mathcal G \nrightarrow \mathcal G.\] 
The following equality holds \begin{equation}\label{asso} 
L_3 \circ (L_3 \times \id)= L_3\circ (\id \times L_3)\end{equation}

\item \textbf{\underline{A.5}} 
Denoting by $L_1$ the morphism $L_1:= L_3\circ I \colon * \nrightarrow \mathcal G$, then
\begin{equation} \label{unit}
 L_3\circ(L_1 \times L_1)= L_1.
\end{equation}

\item \textbf{\underline{A.6}}

If we define the morphism $$L_2:=L_3\circ (L_1 \times \id)\colon  \mathcal G \nrightarrow \mathcal G,$$
then the following equations hold
\begin{enumerate}[$(i)$]
\item
\begin{equation}
\label{eq:A6.1}
L_2=L_3\circ (\id \times L_1).
\end{equation}
\item $L_2$ leaves $L_1, L_2$ and $L_3$ invariant, i.e.
\begin{eqnarray} 
L_2\circ L_1&=& L_1\label{invariance1}\\
L_2\circ L_2&=& L_2\label{invariance2}\\
L_2\circ L_3&=& L_3\circ (L_2 \times L_2)=L_3\label{invariance3}.
\end{eqnarray}
\item 
\begin{equation}
I\circ L_2= L_2\circ I\mbox{ and } L_2 \mbox{ is a symmetric relation.}\label{symmetric}
\end{equation}
\end{enumerate}
\end{itemize}
\end{defn}

\begin{defn}[Relational Lie groupoid] A \emph{relational Lie groupoid} is a relational groupoid $(\mathcal G,L,I)$ such that $\mathcal G,L$ and $I$ are smooth manifolds and smooth relations, respectively.

\end{defn}

The next proposition says we can equally well define a relational groupoid through the relations $L_1,L_2,L_3$ defined in Definition \ref{defn:relational_groupoid} above.

\begin{prop}
The data $(\mathcal{G},L,I)$ and $(\mathcal{G}, I, L_1,L_2,L_3)$ are equivalent.
\end{prop}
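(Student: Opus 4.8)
The plan is to show that the two data packages determine each other, so that one can freely translate between them. First I would establish the direction $(\mathcal G, L, I) \leadsto (\mathcal G, I, L_1, L_2, L_3)$: this is essentially a matter of \emph{definitions}, since $L_3 := I \circ L$, $L_1 := L_3 \circ I$, and $L_2 := L_3 \circ (L_1 \times \id)$ are all spelled out in Definition~\ref{defn:relational_groupoid}. What has to be checked is that these three relations, together with $I$, satisfy whatever axioms one wants to impose on a package of the form $(\mathcal G, I, L_1, L_2, L_3)$ — concretely, the relations A.2, \eqref{asso}, \eqref{unit}, \eqref{eq:A6.1}, \eqref{invariance1}–\eqref{invariance3}, and \eqref{symmetric}, which are exactly the statements recorded in A.4, A.5 and A.6. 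So in this direction there is genuinely nothing to prove beyond unwinding notation.

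The substantive direction is the converse: starting from $(\mathcal G, I, L_1, L_2, L_3)$ subject to those same axioms, I would \emph{recover} $L$ and check that $(\mathcal G, L, I)$ is a relational groupoid in the sense of Definition~\ref{defn:relational_groupoid}. The natural guess is to set $L := I \circ L_3$, using that $I$ is an involution (A.2), so that $I \circ L = I \circ I \circ L_3 = L_3$, consistent with the forward definition $L_3 = I \circ L$. Then I would verify A.1–A.6 in turn. Axiom A.2 is assumed. Axiom A.1 (cyclic symmetry of $L$) should follow from the symmetry properties of $L_3$ encoded in the other axioms — in particular from \eqref{eq:A.3}/A.3 combined with A.6; this is the step I expect to require the most care, since cyclic symmetry of $L$ is a three-fold condition and must be assembled from the two-variable symmetry statement A.3 and the invariance relations \eqref{invariance1}–\eqref{invariance3}. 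Axiom A.3 itself is a restatement of a compatibility between $I$ and $L_3$ that one imposes on the package. Axioms A.4, A.5, A.6 are literally the defining relations of the package $(\mathcal G, I, L_1, L_2, L_3)$, so they transfer back directly.

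The key point making this an honest equivalence (rather than just a change of names) is that the passage $L \mapsto L_3 = I\circ L$ is a bijection on relations, with inverse $L_3 \mapsto I \circ L_3$, because $I$ is an involution; and similarly $L_1, L_2$ are \emph{determined} by $L_3$ and $I$ via the stated formulas, so they carry no independent information — listing them in the tuple is a convenience, not an enlargement of the data. Thus I would organize the proof as: (1) observe $L \leftrightarrow L_3$ is a bijection via $I$; (2) observe $L_1, L_2$ are functions of $(L_3, I)$; (3) match the axiom list of Definition~\ref{defn:relational_groupoid} against the relations A.4–A.6 defining the second package, using A.1–A.3 to handle the symmetry/involution bookkeeping.

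The main obstacle I anticipate is purely bookkeeping: making sure that the cyclic symmetry A.1 of $L$ is genuinely equivalent to the combination of A.3 and the $L_2$-invariance relations, and is not an extra assumption that has been quietly dropped when passing to the $(L_1,L_2,L_3)$ description. If it turns out that cyclic symmetry is \emph{not} recoverable from A.3 plus A.6 alone, then the correct formulation of the proposition is that the package $(\mathcal G, I, L_1, L_2, L_3)$ must also be required to satisfy an explicit symmetry condition on $L_3$ equivalent to A.1, and the equivalence holds with that understood; I would state this explicitly rather than leave it implicit.
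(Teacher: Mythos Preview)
Your approach is essentially the same as the paper's: recover $L$ as $I\circ L_3$ using the involutivity of $I$, and observe that the axioms transfer. The paper's proof is in fact much terser than yours---it simply remarks that once $L = I\circ L_3$, all of A.1--A.6 can be rewritten purely in terms of $I$ and the $L_i$, so the two packages carry the same information; in particular, your anticipated resolution of the cyclic-symmetry issue (namely, that A.1 is retained as an axiom on the $(I,L_1,L_2,L_3)$ side, just rewritten via $L = I\circ L_3$, rather than derived from the others) is exactly what the paper does implicitly.
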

\begin{proof}
Clearly we are able to obtain the relations $L_i$ from $(\mathcal{G},L,I)$. Now, assume that $(\mathcal{G}, I, L_1,L_2,L_3)$ is given. Then $L$ is recovered using that $L=I \circ L_3$ 
,and therefore the axioms \textbf{\underline{A.1}}--\textbf{\underline{A.6}} can be written just in terms of $L_i$ and $I$. 
\end{proof}
\begin{defn}[Relational subgroupoid] Let $(\mathcal G, L, I)$ be a relational groupoid.
A relational groupoid $(\mathcal H, L_{\mathcal H}, I_{\mathcal H})$ is a \emph{relational subgroupoid} of ${\mathcal G, L, I}$ if $\mathcal H \subseteq \mathcal G,\,L_{\mathcal H}\subseteq L, \, I_{\mathcal H}\subseteq I$.
\end{defn}
\begin{defn}[Morphism of relational groupoids]
Let $(\mathcal{G}_1,L_1,I_1)$ and $(\mathcal{G}_2,L_2,I_2)$ be two relational groupoids. A morphism $F: \mathcal{G}_1\to \mathcal{G}_2$ is a relational subgroupoid of $\mathcal{G}_1 \times \mathcal{G}_2$.
\end{defn}
We can extend the category of groupoids $\bf Grpd$ to the \emph{category} of \emph{relational groupoids} $\bf RelGrpd$. 

\begin{rem}
The category $\bf RelGrpd$ is endowed with an involution $$\dagger\colon ({\bf RelGrpd})^{op}\to {\bf RelGrpd}$$ that is the identity on objects and is the relational converse of morphisms, i.e. for $f\colon A\nrightarrow B$ we get $f^\dagger:=\{(b,a)\in B\times A\mid (a,b)\in f\}$. 
\end{rem}

\subsection{Graphical interpretation of the axioms}
\begin{itemize}
\item The cyclicity axiom  \textbf{\underline{A.1}} encodes the cyclic behaviour of the multiplication and inversion maps for groups, namely, if $g,h,k$ are elements of a group $G$ with unit $e$ such that $ghk=e$, then $gh=k^{-1}, \, hk= g^{-1}, kg=h^ {-1}$.
\item \textbf{\underline{A.2}} encodes the involutivity property of the inversion map of a group, i.e. $(g^{-1})^{-1}=g, \forall g \in G$.
\item \textbf{\underline{A.3}} encodes the compatibility between multiplication and inversion:
$$(gh)^{-1}=h^{-1}g^{-1}, \forall g,h \in G.$$
\item \textbf{\underline{A.4}} encodes the associativity of the product: $g(hk)=(gh)k, \forall g,h,k \in G$.
\item \textbf{\underline{A.5}} encodes the property of the unit of a group of being idempotent: $ee=e$.
\item The axiom \textbf{\underline{A.6}}  states an important difference between the construction of relational  groupoids and usual groupoids. The compatibility between the multiplication and the unit is defined up to an equivalence relation, denoted by $L_2$, whereas for groupoids such compatibility is strict; more precisely, for groupoids such equivalence relation is the identity. In addition, the multiplication and the unit are equivalent with respect to $L_2$.
\end{itemize}
Figure \ref{fig:Ax} illustrates the diagrammatics of the relational groupoid axioms.

\begin{figure}[h]
\centering
\def\svgwidth{360pt}
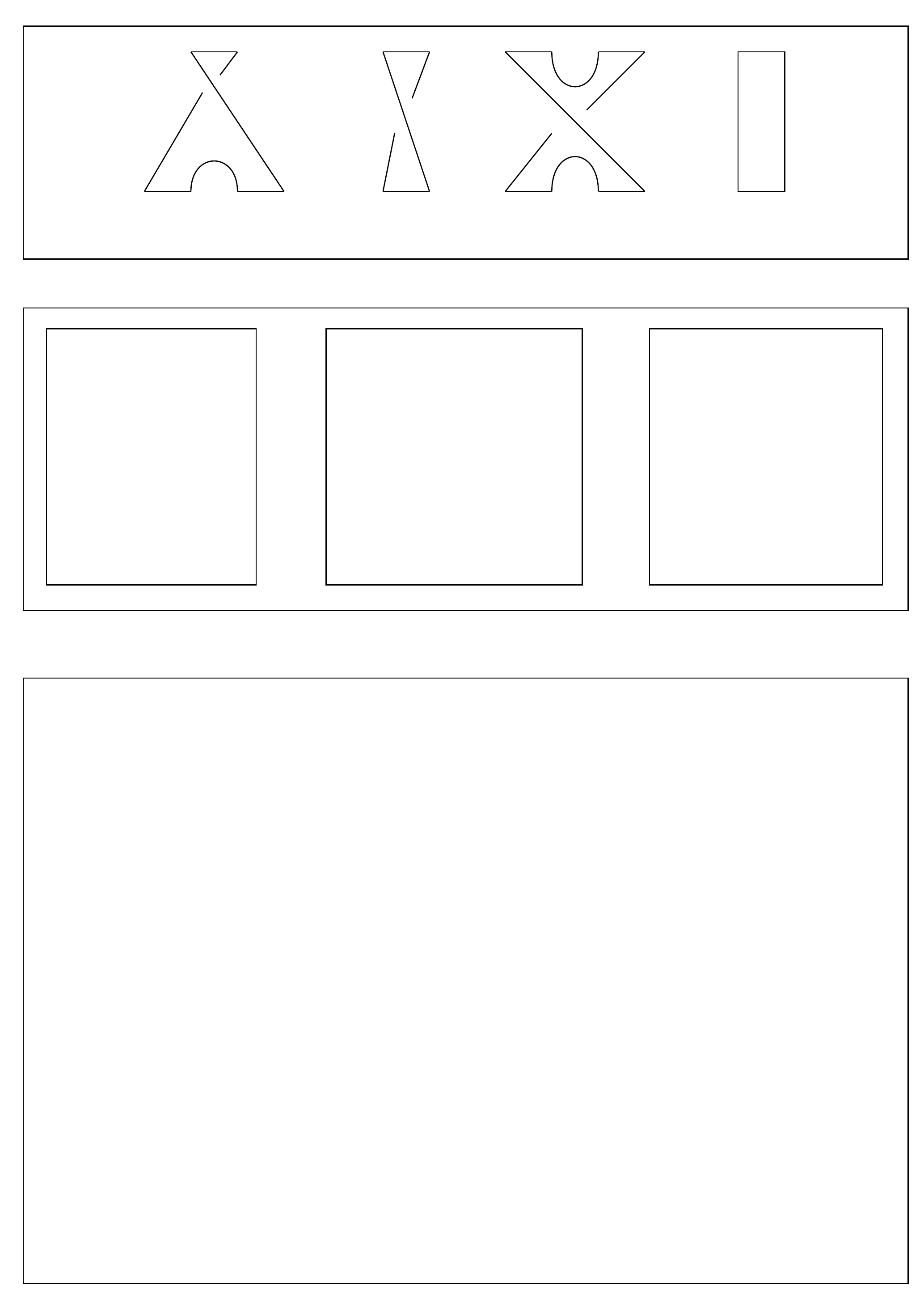
\caption{Diagrammatics of a relational groupoid}
\label{fig:Ax}
\end{figure}

\subsection{Relational groups as relational groupoids}
The following is a representative example of relational groupoids: it is given by a group $\mathcal G$ and a normal subgroup $\mathcal H$ of $\mathcal G$.
\begin{ex}[Relational groups]\label{RelGroup}
Let $\mathcal G$ be a group with multiplication $m\colon \calG\times \calG \to \calG$ and $\calH\triangleleft \calG$ a normal subgroup. Denote by $\sim_\calH \subset \calG \times \calG$ the equivalence relation $ g_1 \sim_\calH g_2 \Leftrightarrow \exists h \in \calH$ such that $m(h,g_1) = g_2.$ Moreover, define 
\begin{align}
    L_3 &:= \{(g_1,g_2, m(m(g_1,g_2),h))\, \vert\, g_1,g_2\in \calG, h\in \calH \} \subset \calG \times \calG \times \calG, \\
    L_2 &:= \sim_\calH \subset \calG \times \calG, \\
    L_1 &:= \calH \triangleleft \calG, \\
    I &:= g \mapsto g^{-1}.
\end{align}
Then the quintuple $(\calG,I,L_1,L_2,L_3)$ defines a relational groupoid. 
\end{ex}

\begin{proof}
We need to check the axioms of a relational groupoid as in Definition \ref{defn:relational_groupoid} explicitly. We will not always write the multiplication map by $m$ but instead $g_1g_2:=m(g_1,g_2)$. To see Axiom \textbf{\underline{A.1}}, let $(g_1,g_2,g_3) \in L$, we want to show $(g_3,g_1,g_2) \in L$ or $(g_3,g_1,g_2^{-1})\in L_3$ Then $g_3 = (g_1g_2h)^{-1}$ for some $h \in \calH$ and $g_3g_1 = h^{-1}g_2^{-1}$. By normality, $h^{-1}g_2^{-1} = g_2^{-1}h'$ with $h' \in \calH$ and so $(g_3,g_1,g_2^{-1})\in L_3$. Clearly $I$ is an involution, hence \textbf{\underline{A.2}} holds. Axiom \textbf{\underline{A.3}} \eqref{eq:A.3} follows from the fact that $L=I\circ L_3$ and that for group elements $g_1,g_2$ we have $I(g_1g_2)=(g_1g_2)^{-1}=g_2^{-1}g_1^{-1}=I(g_2)I(g_1)$, and normality.
Next we want to show \textbf{\underline{A.4}} \eqref{asso}. Consider an element $(g_1,g_2,g_1g_2h)\in L_3$ for some $h\in \calH$. Consider the diagram of relations
\begin{equation}
\begin{tikzcd}
                                                                                                                                                                                                                                                                                               & \mathcal{G}\times\mathcal{G} \arrow[rd, "L_3", bend left=49,negate]   &             \\
\mathcal{G}\times\mathcal{G}\times\mathcal{G} \arrow[ru, "L_3\times\mathrm{id}", bend left=49,negate] \arrow[rd, "\mathrm{id}\times L_3"', bend right=49,negate] \arrow[bend right,negate]{rr}[yshift=1ex]{L_3\circ(\mathrm{id}\times L_3)} \arrow[bend left,negate]{rr}[yshift=-4ex]{L_3\circ(L_3\times\mathrm{id})} &                                                                & \mathcal{G} \\
                                                                                                                                                                                                                                                                                               & \mathcal{G}\times\mathcal{G} \arrow[ru, "L_3"', bend right=49,negate] &            
\end{tikzcd}
\end{equation}\\

Let us first look at the relation $L_3\circ (L_3\times \id)$. It follows that  $(g_1,g_2,g_3)\sim (g_1g_2h_1,g_3)\sim g_1g_2h_1g_3h_2$ with $h_1,h_2\in \calH$ and $g_1,g_2,g_3 \in \mathcal G$. Now using normality of $\calH$ we get $h_1g_3 = g_3h_1'$, for some $h_1'\in\calH$, and setting $\bar h:=h_1'h_2$  we get $g_1g_2h_1g_3h_2 = g_1g_2g_3\bar h\in g_1g_2g_3\calH$. 
On the other hand, if we look at the relation $L_3\circ (\id \times L_3)$, we get $(g_1,g_2,g_3)\sim (g_1,g_2g_3h_3)\sim g_1g_2g_3h_3h_4 
\in g_1g_2g_3\calH$. 
Next we show \textbf{\underline{A.5}} \eqref{unit}. Consider the relations
\begin{equation}
\begin{tikzcd}
* \arrow[negate]{r}[yshift=1ex]{L_1 \times L_1} \arrow[bend right=70,negate]{rr}[yshift=0.8ex]{L_3\circ (L_1\times L_1)}\arrow[bend left=70,negate]{rr}[yshift=1ex]{L_1}& \calG\times \calG \arrow[negate]{r}[yshift=1ex]{L_3} & \calG
\end{tikzcd}
\end{equation}
where we have $* \sim (h_1,h_2)$ with $h_1,h_2\in \calH$ and then $(h_1,h_2)\sim h_1h_2h_3=:\tilde h\in \calH$. On the other hand we have a relation $L_1$ for $* \sim h$ for any element $h$ of $\calH$. Next we show \textbf{\underline{A.6}} \eqref{eq:A6.1}. Let us look at 
\begin{equation}
\begin{tikzcd}
\calG\times * \arrow[negate]{r}[yshift=1ex]{\id \times L_1} \arrow[bend right=70,negate]{rr}[yshift=1ex]{L_3\circ (\id\times L_1)} \arrow[bend left=70,negate]{rr}[yshift=1ex]{L_2}& \calG\times \calG \arrow[negate]{r}[yshift=1ex]{L_3} & \calG
\end{tikzcd}
\end{equation}
and consider first the relation $L_3\circ (\id \times L_1)$.
Take $g\in \calG$ then we get $g\sim (g,h_1)$ for $h_1\in \calH$ and $(g,h_1)\sim g\bar h\in g\calH$ with $\bar h =h_1h_2\in \calH$. On the other hand if we consider the relation $L_2$ we get for $g\in \calG$ and $h\in \calH$ that $g\sim gh\in g\calH$. Next we show \textbf{\underline{A.6}} \eqref{invariance1}.
Consider the relations

\begin{equation}
\begin{tikzcd}
* \arrow[negate]{r}[yshift=1ex]{L_1} \arrow[bend right=70,negate]{rr}[yshift=1ex]{L_2\circ L_1} \arrow[bend left=70,negate]{rr}[yshift=1ex]{L_1}& \calG \arrow[negate]{r}[yshift=1ex]{L_2} & \calG
\end{tikzcd}
\end{equation}

and let us first look at the relation $L_2\circ L_1$. The relation $L_1$ is $* \sim h$ for any element in $h\in \calH$ and then by $L_2$ we get $* \sim h\sim \tilde h:=h\bar h\in \calH$. Since $\calH$ is in particular a subgroup, we get $* \sim h$ for any $h \in \calH$, which is $L_1$. 
Finally, we show \textbf{\underline{A.6}} \eqref{invariance2}. Then we have the following relations

\begin{equation}
\begin{tikzcd}
\mathcal G \arrow[negate]{r}[yshift=1ex]{L_2} \arrow[bend right=70,negate]{rr}[yshift=1ex]{L_2\circ L_2} \arrow[bend left=70,negate]{rr}[yshift=1ex]{L_2}& \calG \arrow[negate]{r}[yshift=1ex]{L_2} & \calG
\end{tikzcd}
\end{equation}
and clearly we have that $L_2$ gives for an element $g\in \calG$ that $g\sim gh_1$ for $h_1\in \calH$ and then again $gh_1\sim gh_1h_2=g\bar h\in g\calH$ with $\bar h=h_1h_2 \in \calH$, which gives the same relation as just $L_2$. Similarly, we can show for \textbf{\underline{A.6}} \eqref{invariance3} that also the last diagram commutes, completing the proof.

\begin{equation}
\begin{tikzcd}
 & \calG\times \calG \arrow[bend left,negate]{dddr}[yshift=0.5ex]{L_3}& \\
 & &\\
 & &\\
\calG \times \calG \arrow[negate]{r}[yshift=1ex]{L_3} \arrow[bend right=70,negate]{rr}[yshift=1ex]{L_2\circ L_3} \arrow[bend left=70,negate]{rr}[yshift=1ex]{L_3} \arrow[bend left,negate]{uuur}[yshift=1ex]{L_2\times L_2}& \calG \arrow[negate]{r}[yshift=1ex]{L_2} & \calG
\end{tikzcd}
\end{equation}
\end{proof}
For later use we record some simple examples of relational groups below.  
\begin{ex}[A finite example] 
\label{RelGroup2}
One can check that $\calG=\mathbb{Z}_4$ with normal subgroup $\calH=\mathbb{Z}_2\triangleleft \mathbb{Z}_4$ together with the canonical relations as in Example \ref{RelGroup} is a relational group and hence a relational groupoid.
\end{ex}
\begin{ex}[Discrete example: integers modulo $n$]\label{ex:discrete}
We want to consider the example of $\Z/n\Z$ for some $n\geq2$.
Hence, let $G=\Z$, $L_1=n\Z$, $L_2=\{(a,b)\in \Z\times \Z\mid a-b\in n\Z\}$ and $L_3=\{(a,b,c)\in \Z\times \Z\times \Z\mid \exists k \in \Z\colon  a+b+nk=c\}$. 
\end{ex}

\begin{ex}[A continuous example]
\label{RelGroup3}
Let $\calG=S^1:=\{z\in \C\mid \vert z\vert =1\}$ and let $L_1=\{\zeta\in \C\mid \zeta^n=1\}\triangleleft S^1$ be the normal subgroup of $n$-th roots of unity.  We define the relation $$L_2=\left\{(z,w)\in S^1\times S^1\,\big|\, \exists k\in\{0,1,\ldots,n-1\}\textnormal{ such that }\arg(w)-\arg(z)=\frac{2\pi k}{n}\right\}.$$
In particular, $z\sim w$ if and only if $\exists \zeta\in L_1$ such that $z\cdot \zeta=w$ for $z,w\in S^1$.
We define $L_3=\{(z,w,z\cdot w\cdot\zeta)\in S^1\times S^1\times S^1\mid z,w\in S^1,\zeta\in L_1\}\subset S^1\times S^1\times S^1$.
Moreover, we define the map $I$ by complex conjugation $z\mapsto \bar z$. Then one can check that this is indeed a relational group as in Example \ref{RelGroup} and hence a relational groupoid.
\end{ex}

\subsection{Additional Examples}
Here are some other simple examples of relational groupoids that are not relational groups. 
\begin{ex}[Relational group bundle]
Example \ref{RelGroup} can be extended to a parametrized family of relational groups. The local model in this example is $\mathcal G \times \mathbb R^k$, where at each point $p$ in $\mathbb R^k$ there is a fiber that we identify with $\mathcal G$. For instance, Example \ref{RelGroup3} can be extended to the relational bundle  $S^1\times \mathbb R \to \mathbb R$, where each fiber is isomorphic to the relational group $\mathcal G=S^1$ and at each fiber, the normal subgroup of $n$-th roots of unity is chosen.

\end{ex}

\begin{ex}[Groupoids]
Of course, groupoids, as in Section \ref{subsec:Groupoids}, are also examples of relational groupoids. If $(G,m)$ is a groupoid, then $L:=\mathrm{Graph}(I\circ m)$, $I(g) = g^{-1}$ makes $G$ into a relational groupoid. In this case, the special relations are given as follows: $L_1 \subset G$ is the unit section, $L_2 = \mathrm{diag}_G\subset G \times G$ is the diagonal, and $L_3=\mathrm{Graph}(m)$.
\end{ex}

\subsection{Relational symplectic groupoids}
\begin{defn} \label{Rel}
A \emph{relational symplectic groupoid} is a  relational groupoid $(\mathcal G,\, L,\, I)$ where 
\begin{enumerate}
 \item $\mathcal G$ is a weak symplectic manifold\footnote{In the infinite-dimensional setting we restrict to the case of Banach manifolds (when the regularity type of fields is fixed) $\calG$ endowed with a closed 2-form $\omega$, such that the induced map $\omega^{\sharp}: T\mathcal G \to T^*\mathcal G$ is injective. The result also holds for smooth fields and Fr\'echet manifolds.}
\item $L$ is an immersed Lagrangian submanifold of $\mathcal{G}\times\mathcal{G}\times \overline{\mathcal{G}}$, where $\overline {\mathcal G}$ denotes $\mathcal G$ equipped with the negative of the given symplectic form.
 \item $I$ is an antisymplectomorphism of $\mathcal G$.
\end{enumerate}
\end{defn}
\begin{ex}\label{ex:RSGPSM} In \cite{Relational, ContrerasThesis} it is proven that the phase space of the Poisson Sigma Model (PSM) \footnote{where the source space is a disk, and the target space is a Poisson manifold.} is an example of a infinite-dimensional relational symplectic groupoid.  
\end{ex}


\subsection{Reduction of relational groupoids}
One of the key properties of a relational groupoid is the fact that $L_2$ encodes the information of an equivalence relation. In general $L_2$ is not necessarily an equivalence relation on the whole of $\mathcal G$, but on a subset $\calC$, called the \emph{constraint set}. 
\begin{prop}
\label{prop:constraint}
Define 
\[\calC:=L_2\circ \mathcal G,\]
where $\mathcal G$ is considered as the relation $\ast \nrightarrow \mathcal G$. 
Then $L_2$ is an equivalence relation on $\calC$.
\end{prop}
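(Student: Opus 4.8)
The goal is to show that $L_2$, restricted to the set $\calC = L_2 \circ \mathcal G$, is reflexive, symmetric, and transitive. I would first unpack the definition of $\calC$: an element $g \in \mathcal G$ lies in $\calC$ precisely when there exists $g' \in \mathcal G$ with $(g', g) \in L_2$, equivalently $g \in \Ima L_2$. The key structural inputs are the axioms involving $L_2$ from Definition \ref{defn:relational_groupoid}, especially \eqref{invariance2}, i.e. $L_2 \circ L_2 = L_2$, and \eqref{symmetric}, which asserts that $L_2$ is a symmetric relation. Symmetry of $L_2$ on $\calC$ is then immediate from \eqref{symmetric} (symmetry holds globally, hence a fortiori on $\calC$), and I should also note that symmetry together with $L_2 \circ \mathcal G$ shows $\calC$ is exactly the set of $g$ such that $(g,g'') \in L_2$ for some $g''$, so $\calC$ is ``closed'' under $L_2$ in both directions.

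The main work is reflexivity: I must show that for every $g \in \calC$, one has $(g,g) \in L_2$. Here is where \eqref{invariance2} enters. If $g \in \calC$, pick $g_0$ with $(g_0, g) \in L_2$. By symmetry \eqref{symmetric}, $(g, g_0) \in L_2$ as well. Now $(g,g) = $ is obtained by composing: since $(g_0,g)\in L_2$ and $(g,g_0) \in L_2$, hmm — I need to be careful about composition order. The cleaner route: from $(g_0,g) \in L_2$ and symmetry $(g,g_0)\in L_2$, the pair $(g_0,g)$ followed by... Actually the right statement is: $(g_0,g)\in L_2$ and $(g,g_0)\in L_2$ give $(g_0,g_0) \in L_2\circ L_2 = L_2$ is not what I want; I want $(g,g)$. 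Instead: $(g, g_0)\in L_2$ and $(g_0, g)\in L_2$ compose to $(g,g)\in L_2\circ L_2 = L_2$ by \eqref{invariance2}. That is exactly reflexivity on $\calC$. So the key identity used is idempotency of $L_2$ under composition together with its symmetry.

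For transitivity: suppose $g, g', g'' \in \calC$ with $(g,g')\in L_2$ and $(g',g'')\in L_2$. Then $(g,g'')\in L_2\circ L_2 = L_2$ by \eqref{invariance2} directly. I should also check the relation ``stays inside'' $\calC$, i.e. that if $(g,g')\in L_2$ and $g\in\calC$ then $g'\in\calC$ — this follows since $(g,g')\in L_2$ exhibits $g'\in \Ima L_2 = \calC$ (using symmetry to match the definition $\calC = L_2\circ\mathcal G$). I would organize the write-up as: (1) characterize $\calC$ as $\Ima L_2$ and observe via symmetry it equals the ``domain'' of $L_2$; (2) symmetry on $\calC$ from \eqref{symmetric}; (3) reflexivity on $\calC$ via the compose-with-symmetric-partner trick plus \eqref{invariance2}; (4) transitivity directly from \eqref{invariance2}.

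\textbf{Expected main obstacle.} The only genuinely delicate point is bookkeeping with composition order of relations (whether $L_2 \circ L_2$ means ``first apply the left or the right factor''), and making sure the reflexivity argument correctly produces $(g,g)$ rather than $(g_0,g_0)$; once the convention is fixed, everything follows formally from idempotency and symmetry of $L_2$. No hard analysis or geometry is needed — it is a purely diagrammatic/relational argument.
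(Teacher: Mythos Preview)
Your proposal is correct and follows the same approach as the paper: symmetry from \eqref{symmetric}, transitivity from \eqref{invariance2}, and reflexivity from the definition of $\calC$ combined with those two axioms. The paper's proof is terser (it merely says ``reflexivity follows from the definition of $\calC$''), whereas you spell out the actual mechanism---pick a witness $g_0$ with $(g_0,g)\in L_2$, flip by symmetry, compose, and use idempotency to land at $(g,g)\in L_2$---which is exactly what that sentence must mean.
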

\begin{proof}
The fact that $L_2$ is transitive and symmetric follows from Axiom \textbf{\underline{A.6}} ($L_2\circ L_2=L_2$ and $L_2^{\dagger}=L_2$). Reflexivity follows from the definition of $\calC$.
\end{proof}

\begin{thm}
Let $(\mathcal{G},I,L_1,L_2,L_3)$ be a relational groupoid. Then $\underline{\calG}=\mathcal{C}/L_2$ is a groupoid and the quotient map $q$ is a morphism of relational groupoids.
\end{thm}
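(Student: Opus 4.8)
The goal is to show that $\underline{\calG} := \calC/L_2$ inherits the structure of a genuine groupoid from the relational structure $(\calG, I, L_1, L_2, L_3)$, and that the quotient map $q\colon \calC \to \underline{\calG}$ is a morphism of relational groupoids. The plan is to transport each of the structure relations $L_1, L_2, L_3, I$ through the quotient $q$ and check that, after passing to equivalence classes, the resulting relations become \emph{honest functions} satisfying the groupoid axioms. Concretely, I would first define the candidate structure: the base $\underline{M}$ should be $L_1(\ast)/L_2$ (the image of the unit relation, quotiented), the multiplication should be the relation on $\underline{\calG}$ induced by $L_3$, the inverse should be induced by $I$, and the unit inclusion induced by $L_1$. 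The source and target maps are then read off from $L_2 \circ (L_1 \times \id)$ composed appropriately, i.e. from $L_2$ itself restricted suitably.

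The main technical steps, in order, would be: (1) Show that $I$ descends to a well-defined involution $\underline{I}\colon \underline{\calG} \to \underline{\calG}$; this uses Axiom \textbf{\underline{A.6}} equation \eqref{symmetric}, namely $I \circ L_2 = L_2 \circ I$, so $I$ maps $L_2$-classes to $L_2$-classes, and that $I$ preserves $\calC$, which follows because $\calC = L_2 \circ \calG$ and $I$ is an involution commuting with $L_2$. (2) Show $L_3$ descends to a well-defined \emph{function} $\underline{m}\colon \underline{\calG} \times_{\underline{M}} \underline{\calG} \to \underline{\calG}$: well-definedness on $L_2$-classes uses the invariance equation \eqref{invariance3}, $L_2 \circ L_3 = L_3 \circ (L_2 \times L_2) = L_3$; single-valuedness (the crucial point) comes from the fact that any two outputs of $L_3$ on the same pair differ by an element of $L_1$ and are hence $L_2$-equivalent, because $L_2 = L_3 \circ (L_1 \times \id)$ by definition of $L_2$ together with \eqref{eq:A6.1}. (3) Check the groupoid axioms for $(\underline{\calG}, \underline{m})$: associativity descends from Axiom \textbf{\underline{A.4}} \eqref{asso}; the unit axioms descend from \textbf{\underline{A.5}} \eqref{unit} and \textbf{\underline{A.6}} equations \eqref{invariance1}, \eqref{eq:A6.1}, which after quotienting say exactly that the image of $L_1$ acts as a two-sided unit; the inverse axioms descend from \textbf{\underml{A.1}}--\textbf{\underline{A.3}}. (4) Finally, verify that $q$ — viewed as the relation $\{(g, [g]) : g \in \calC\} \subset \calG \times \underline{\calG}$ — is a relational subgroupoid of $\calG \times \underline{\calG}$, i.e. that it intersects the product relations $I \times \underline{I}$, $L_1 \times \underline{L_1}$, etc., compatibly; this is essentially a restatement of the well-definedness checks in steps (1)--(3).

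I expect the main obstacle to be step (2), the single-valuedness of the induced multiplication: one must carefully argue that the non-strict compatibility encoded by $L_2$ (rather than the identity, as Axiom \textbf{\underline{A.6}} emphasizes) is precisely what is killed by the quotient. The key lemma to isolate and prove first is: for $g_1, g_2 \in \calC$ with $(g_1, g_2)$ in the domain of $L_3$, the set $L_3(g_1, g_2) \subseteq \calG$ is contained in $\calC$ and forms a single $L_2$-equivalence class. The containment in $\calC$ follows since $L_2 \circ L_3 = L_3$ forces every output to lie in $L_2 \circ \calG = \calC$; the "single class" claim follows because if $k, k' \in L_3(g_1, g_2)$ then, tracing through the cyclicity \textbf{\underline{A.1}} and the definition $L_2 := L_3 \circ (L_1 \times \id)$, one shows $(k, k') \in L_2$. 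A secondary subtlety is confirming that the domain condition for $\underline{m}$ — that $[g_1], [g_2]$ be "composable" — matches the fibered product over $\underline{M}$; this amounts to checking that $(g_1, g_2)$ lies in the domain of $L_3$ if and only if $\underline{s}([g_1]) = \underline{t}([g_2])$, which should follow from the unit and invariance axioms in \textbf{\underline{A.6}}. Once this lemma is in hand, steps (1), (3), (4) are routine diagram-chases transporting the already-established relational identities through the surjection $q$.
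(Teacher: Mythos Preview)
Your proposal is correct and is in fact considerably more complete than the paper's own proof, which is only a brief sketch. The paper's argument does not spell out steps (1)--(4) of your plan at all; it confines itself to defining the relational source and target
\[
\mathcal{S} = \{(c,\ell) \in \calC \times L_1 \mid \exists g \in \calG,\ (\ell,c,g) \in L_3\}, \qquad \mathcal{T} = \{(c,\ell) \in \calC \times L_1 \mid \exists g \in \calG,\ (c,\ell,g) \in L_3\},
\]
observing that $\mathcal{T} = I \circ \mathcal{S}$, and then asserting that the reduction $\underline{\mathcal{S}} = q \circ \mathcal{S}$ is a surjective map $\underline{\calG} \to M := L_1/L_2$. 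Everything else --- that multiplication and inversion descend, that the groupoid axioms hold, that $q$ is a morphism --- is left implicit.

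The one place where the paper is more concrete than your outline is precisely this definition of source and target: you say these should be ``read off from $L_2 \circ (L_1 \times \id)$ composed appropriately,'' which is vague; the paper's formulas above make transparent both that the base of objects is $M = L_1/L_2$ and that the composability condition in your step (2) (domain of $\underline{m}$ matching the fibered product over $M$) is exactly the existence of a common $\ell \in L_1$ with $(\ell,g_2,\cdot), (g_1,\ell,\cdot) \in L_3$. I would recommend adopting the paper's explicit $\mathcal{S}, \mathcal{T}$ and then carrying out your steps (1)--(4), which supply the content the paper omits. Your key lemma in step (2) --- that $L_3(g_1,g_2)$ is a single $L_2$-class in $\calC$ --- is exactly the right thing to isolate, and your proposed proof of it via \textbf{\underline{A.1}} and $L_2 = L_3 \circ (L_1 \times \id)$ is correct. (Minor: you have a typo \texttt{\textbackslash underml} for \texttt{\textbackslash underline}.)
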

\begin{proof}
First, let us consider the following relations, that are the relational analogues of the source and target maps:
\[\mathcal S:= \{(c,\ell)\in \calC\times L_1\,\vert \,  \exists g \in \mathcal G \mbox{ s.t. } (\ell,c,g) \in L_3 \} \]
and 
\[\mathcal T:= \{(c,\ell)\in \calC\times L_1\,\vert \,  \exists g \in \mathcal G \mbox{ s.t. } (c,\ell,g) \in L_3 \}. \]
It follows from the definition that $\mathcal T = I \circ \mathcal S$ and furthermore, if $M:= L_1/ L_2:$
\[s:= \underline{\mathcal S}: \underline{\calG} \to M \]
is a surjective map, where $\underline{\mathcal S}=q\circ \mathcal S$ is the reduction of $\calS$.
\end{proof}
\begin{ex}[Relational group]
Following Example \ref{RelGroup}, the reduction of a relational group  is a (set theoretical) group. If in addition we impose the condition  that $G$ is a Lie group, and  $H$ is closed subgroup, then the quotient is a Lie group.
\end{ex}
\begin{ex}[Relational pair groupoid]
Let $G$ be a set. Then one can define its \emph{pair groupoid} by $G\times G\rightrightarrows G$, where $s$ and $t$ are given by projection to the first and second factor respectively. For two elements $(g,h)$ and $(h,k)$, composition is given by $(g,h)(h,k)=(g,k)$. The inverse is defined by $(g,h)^{-1}=(h,g)$. We can define a \emph{relational pair groupoid} in a similar way. For a relational version of this example, let $\mathcal G$ be a set and an equivalence relation $R$. We then define the relational groupoid $\mathcal G \times \mathcal G$, where $L_1=\mathcal G$, $L_2=R\times R$, and $L_3$ is given by composition of relations. In case that the equivalence relation $R$ is the identity, we recover the pair groupoid after $L_2$-reduction.
\end{ex}
\begin{ex}[Relational symplectic groupoid]
If $M$ is an integrable Poisson manifold, then the reduction of the infinite-dimensional relational symplectic groupoid of Example \ref{ex:RSGPSM} is a finite-dimensional symplectic groupoid integrating the Poisson manifold $M$. See \cite{Relational, ContrerasThesis} for the details of the reduction procedure and how it coincides with the gauge reduction of the Poisson Sigma Model. 
\end{ex}
The previous results and examples allow us to connect different constructions on relational groupoids with standard notions on groupoids, via reduction. 
For instance, in the next subsection we introduce actions of relational groupoids, that are necessary to describe the compatibility of measures (relational Haar systems) and the structure relations in Section \ref{sec:Relational_convolution_algebras}.

\subsection{Relational groupoid actions}
First, following the characterization of the Haar measure on groups via right-invariance, we introduce the notion of relational right action.
Right, left, and adjoint group actions are natural examples of relational group actions, and also Haar measures (on groups and relational groups) are invariant with respect to the right relational group action. Also, the conditions (\ref{item:haarsys1}) and (\ref{item:haarsys2}) of a relational Haar system in Definition \ref{def:RelHaarSys} encode the invariance with respect to relational right actions.

\begin{defn}[Relational right action]
Let $(\mathcal{G},L,I)$ be a relational groupoid and let $Z$ be a set with an equivalence relation $L_Z$. A \emph{relational right action} of $\mathcal{G}$ on $Z$ is a relation $\rho\colon Z \times \mathcal{G} \nrightarrow Z$ such that 
\begin{enumerate} \item we have $$\rho \circ (\rho \times \mathrm{id}_\mathcal{G}) = \rho \circ (\mathrm{id}_Z \times L_3)$$ as relations $Z \times \mathcal{G} \times \mathcal{G} \nrightarrow Z$, 
\item The relation $ \rho_{L_1}\colon Z \nrightarrow Z$ given by $$\rho_{L_1} = \{(x,y) \in Z \times Z\mid \exists g \in L_1, (x,g,y) \in \rho\}$$  coincides with the equivalence relation $L_Z$ on $Z$, i.e. $\rho_{L_1} = L_Z$. 
\end{enumerate}
We also say that $(\mathcal{G},L,I)$ acts on $(Z,L_Z)$ by relations from the right. Sometimes we drop $L_Z$ notation. 
\end{defn}
An obvious example is the action of a relational groupoid on itself from the right:
\begin{ex}
Let $\mathcal{G}$ be a relational groupoid. Then setting $\rho = L_3$ defines a relational right action of $\mathcal{G}$ on $(\mathcal{G},L_1)$.
\end{ex}
\begin{proof}
This follows directly from axioms {\bf \ul{A.4}} (associativity) and {\bf \ul{A.6}} (unitality) of Definition \ref{defn:relational_groupoid}. 
\end{proof}
Of course, there is an analogous definition of left relational action, and relational groupoids also act on themselves from the left. \\
We can generalize the relation $\rho_{L_1}$ defined above to an arbitrary subset of $\mathcal{G}$. 
\begin{defn}[Relational action]
Let $(\mathcal G,L,I)$ be a relational groupoid and suppose that $\mathcal{G}$ acts on $Z$ by relations and let $S\subset \mathcal G$ be any subset of $\mathcal{G}$. Then we define the relation $R_S\colon Z \nrightarrow Z$ by 
\begin{equation}
    R_S :=  \{(z_1,z_2) \in Z \times Z\mid \exists g \in S, (z_1,g,z_2) \in \rho\}
\end{equation}
For $S = \{g\}\subset \mathcal G$, we write $R_S = R_g$.
\end{defn}
For $g,h \in \mathcal{G}$, we denote $gh:=\{x \in \mathcal{G}\mid (g,h,x) \in L_3\}$. With this notation, we have the following proposition:
\begin{prop}
Let $(\mathcal G,L,I)$ be a relational groupoid and let $g,h \in \mathcal{G}$, which acts on a set $Z$ from the right. Then 
\begin{equation}
    R_h \circ R_g = R_{gh}.
\end{equation}
\end{prop}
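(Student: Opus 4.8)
The plan is to unwind both sides as relations $Z \nrightarrow Z$ and show they describe the same subset of $Z \times Z$, using the associativity axiom of the relational action (the identity $\rho \circ (\rho \times \id_\calG) = \rho \circ (\id_Z \times L_3)$) as the essential ingredient. First I would expand the left-hand side: by definition, $(z_1,z_3) \in R_h \circ R_g$ if and only if there exists $z_2 \in Z$ with $(z_1,z_2) \in R_g$ and $(z_2,z_3) \in R_h$, i.e. $(z_1,g,z_2)\in\rho$ and $(z_2,h,z_3)\in\rho$. This says precisely that $(z_1,g,h,z_3)$ lies in the composite relation $\rho \circ (\rho \times \id_\calG)\colon Z\times\calG\times\calG \nrightarrow Z$ evaluated at the pair $(g,h)$ — more carefully, that $((z_1,g,h),z_3)\in \rho\circ(\rho\times\id_\calG)$.

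Next I would expand the right-hand side: $(z_1,z_3) \in R_{gh}$ means there exists $x \in gh$, i.e. $x$ with $(g,h,x)\in L_3$, such that $(z_1,x,z_3)\in\rho$. Unwinding the composite $\rho\circ(\id_Z\times L_3)$ at $(g,h)$, one sees that $((z_1,g,h),z_3)\in\rho\circ(\id_Z\times L_3)$ if and only if there is some $x$ with $(g,h,x)\in L_3$ and $(z_1,x,z_3)\in\rho$ — which is exactly the condition defining $R_{gh}$. So both $R_h\circ R_g$ and $R_{gh}$, viewed as fibers over $(g,h)$ of the respective composite relations on $Z\times\calG\times\calG$, coincide with the two sides of the action-associativity identity. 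Applying that identity finishes the argument.

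The only subtlety — and the step I expect to require the most care — is bookkeeping of the order of composition and the placement of arguments in the $\calG$-factors: the relational action is a \emph{right} action, so in $R_h \circ R_g$ one applies $g$ first and then $h$, and correspondingly the composite must be read as feeding $(z_1,g)$ through $\rho$ before pairing with $h$; matching this against $\rho\circ(\id_Z\times L_3)$ with input $(z_1,g,h)$ and output in $gh$ (rather than $hg$) is where sign-of-composition conventions could bite. Once that is pinned down, no computation beyond definition-chasing is needed. I would also remark that specializing $g$ or $h$ to lie in $L_1$ and invoking condition (2) of the relational right action recovers the compatibility of $R_g$ with the equivalence relation $L_Z$, which is the form in which this proposition will be used for relational Haar systems.
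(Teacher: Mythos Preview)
Your proof is correct and is exactly the natural argument: unwinding both sides as fibers over $(g,h)$ of the two composite relations $\rho\circ(\rho\times\id_\calG)$ and $\rho\circ(\id_Z\times L_3)$, then invoking the associativity axiom of the relational right action. The paper in fact states this proposition without proof, so there is nothing to compare against; your write-up supplies the omitted verification, and the bookkeeping concern you flag (that $R_h\circ R_g$ applies $g$ first, matching the input order $(g,h)$ in $L_3$) is handled correctly.
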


\section{Relational convolution algebras}
\label{sec:Relational_convolution_algebras}
\subsection{Relational Haar systems}\label{ssec: Relational_Haar}
Let $(\mathcal G,L,I)$ be a relational groupoid. As for a usual groupoid, we denote by
\begin{equation}
\mathcal G^{(2)}:= \{(g,h) \in \mathcal G\times \mathcal G\mid \exists k \in \mathcal G \text{ s.t. } (g,h,k) \in L_3\}
\end{equation}
the set of composable pairs in $\mathcal G$. For $k \in \mathcal G$, we denote by $\mathcal G^{(2)}_k$ the set of pairs that compose to $k$, i.e. we have 
\begin{equation}
\mathcal G^{(2)}_k:= \{(g,h) \in \mathcal G\times \mathcal G \mid (g,h,k) \in L_3\}
\end{equation}
Recall that we have the quotient groupoid $\underline{\calG}:=\calC/L_2$ and there is a relation $q\colon \mathcal{G} \nrightarrow \underline{\calG}$ which, restricted to $\calC\times \underline{\calG}$, is the graph of a surjective map that we will also denote $q$. It is clear from the definitions that for every $g \in \calC$ we have $\mathcal{G}^{(2)}_g/(L_2 \times L_2) = \underline{\calG}^{(2)}_{q(g)}$, and we denote\footnote{We are slightly abusing notation here, $q_g$ is actually the restriction of $q\times q$ to $\mathcal{G}^{(2)}_g$.} the quotient map $q_g\colon \mathcal{G}^{(2)}_g\to \underline{\calG}^{(2)}_{\ul{g}}$, where we denote $q(g) =: \underline{g}$. \\

We will need the following terminology from measure theory (see e.g. \cite{Ambrosio+,ChangPollard}). 
\begin{defn}[Disintegrating measure]\label{def:Disintegration}
Let $\mu$ be a measure on a set $Y$ and let $q\colon Y \to X$ be a map. Moreover, let $\nu=q_*\mu$ be the pushforward measure on $X$. We say that \emph{ $\mu$ disintegrates  with respect to $q$} if there exists a family of probability measures $(\mu_x)_{x\in X}$   such that
\begin{itemize}
    \item For all $\mu$-measurable sets $E\subset Y$,  the function $x \to \mu_x(E)$ is $\nu$-measurable.
    \item $\mu_x(Y\setminus q^{-1}(x)) = 0$ for $\nu$-almost every $x \in X$.
    \item For all $\mu$-measurable functions $f\colon Y\to \R$, we have 
    $$\int_Y f d\mu = \int_X \int_{q^{-1}(x)}f(y)d\mu_x(y)d\nu(x).$$
\end{itemize}\end{defn}
A measure $\mu$ disintegrates under fairly general assumptions, e.g. when $X,Y$ are Radon spaces and $q$ is Borel-measurable. The family $\mu_x$ is uniquely determined $\nu$-almost everywhere (and in turn determines the measure $\mu$). See \cite{Disintegration} for a detailed account. \\
We now define a relational Haar system as follows. 
\begin{defn}[Relational Haar system]\label{def:RelHaarSys}
Let $\mathcal{G}$ be a relational groupoid, $\underline{\calG}$ its quotient groupoid and $q\colon \calC \to \underline{\calG}$ the quotient map. A \emph{relational Haar system} on $\mathcal{G}$ is a system of measures $\mu_g$ on $\mathcal{G}^{(2)}_g$, $g \in \mathcal{G}$ such that $\mu_g = 0$ for $g \in \mathcal{G}\setminus\calC$ and for $g \in \calC$ we have
\begin{enumerate}[$(i)$]
    \item\label{item:haarsys1} If $q(g)= q(g')$ then $(q_g)_*\mu_g = (q_{g'})_*\mu_{g'}$.
    \item\label{item:haarsys2} The system of measures $\nu_{\underline{g}}:=(q_g)_*\mu_g$ on $\underline{\calG}^{(2)}_{\underline{g}}$ defines a right Haar system on the quotient groupoid $\underline{\calG}$ (recall that $\ul{g} = q(g)$).
    \item\label{item:haarsys3} $\mu_g$ disintegrates with respect to $q_g$. 
\end{enumerate}
\end{defn}
Equivalently, a relational Haar system on $\mathcal{G}$ is determined by a right Haar system $(\nu_{\underline{g}})_{\underline{g}\in \underline{\calG}}$ on the quotient groupoid $\underline{\calG}$ and a family of probability measures $(\mu_g)_{\underline{g}_1\underline{g}_2}$ on the fibers $q_g^{-1}(\underline{g}_1,\underline{g}_2)$ for $\ul{g}_1\ul{g}_2 = \ul g$. 
\begin{rem}
A relational Haar system is invariant under the relational right action of the relational groupoid on itself, in the following sense. Let $A \subset \calG^{(2)}_g$ be an $L_2$-saturated set, i.e. ($L_2 \times L_2) \circ A = A$. Then, if we have $(g,h,k) \in L_3$, we have $\mu_g(A) = \mu_{k}((\id \times R_h) \circ A)$. 
\end{rem}

Notice also that the condition that $\mu_g$ vanishes for $g \notin \calC$ is automatic because in this case $\mathcal{G}^{(2)}_g = \emptyset$ (see also Proposition \ref{prop:ginC}). 
In some sense, the axioms presented above are the weakest possible set of axioms that ensure existence of a well-defined Haar system on the quotient. However, these measures can have extra properties with regard to the structure relations that define a relational groupoid:
\begin{defn}[$L_2$-invariant measure]\label{def:l2inv}
We say that a relational Haar system $\mu_g$ is \emph{ $L_2$-invariant} if $\mu_g = \mu_h$ whenever $(g,h) \in L_2$. 
\end{defn}
Notice that this condition is stronger than condition (i) of Definition \ref{def:RelHaarSys}, which merely demands that the pushforwards be the same. 
\begin{defn}[Split relational Haar system]\label{def:split}
We say that a relational Haar system $\mu_g$ \emph{splits} if, for all $g \in \calC$, there is a family of probability measures $(\tau^g_{\ul{g}_1})_{\ul{g}_1 \in \underline{\calG}}$ on $\mathcal{G}$ with $\tau^g_{\ul{g}_1}$ supported on $q^{-1}(\ul{g}_1)$ such that for all $\ul{g}_1,\ul{g}_2 \in \underline{\calG}$ with $\ul{g}_1\ul{g}_2 = q(g)$
    \begin{equation}
    (\mu_g)_{\ul{g}_1\ul{g}_2} = \tau_{\ul{g}_1}^g \times \tau_{\ul{g}_2}^g.
    \end{equation}
\end{defn}
\begin{defn}[Strongly split relational Haar system]\label{def:strongsplit}
We say that a relational Haar system $\mu_g$ \emph{splits strongly} if there is a family of probability measures $(\tau_{\ul{g}_1})_{\ul{g}_1 \in \ul\calG}$ on $\mathcal{G}$ with $\tau_{\ul{g}_1}$ supported on $q^{-1}(\ul{g}_1)$ such that for all $g \in \calG$ and $\ul{g}_1,\ul{g}_2 \in \underline{\calG}$ with $\ul{g}_1\ul{g}_2 = q(g)$
    \begin{equation}
    (\mu_g)_{\ul{g}_1\ul{g}_2} = \tau_{\ul{g}_1} \times \tau_{\ul{g}_2}.
    \end{equation}
\end{defn}
In particular, a strongly split relational Haar system is split and $L_2$-invariant.
\begin{ex}
Consider the relational group $\calG=\mathbb{Z}_4$ with normal subgroup $\calH=\mathbb{Z}_2\triangleleft \mathbb{Z}_4$, as in Example \ref{RelGroup2}. The sets $\calG^{(2)}_g$ are given by
\begin{align*}
    \calG^{(2)}_0 &= \{(0,0),(1,1),(2,2),(3,3),(0,2),(2,0),(1,3),(3,1)\} = \calG^{(2)}_2 \\
     \calG^{(2)}_1 &= \{(1,0),(0,1),(1,2),(2,1),(3,0),(0,3),(2,3),(3,2)\} = \calG^{(2)}_3 
\end{align*}
and in the quotient we have 
\begin{align*}
    \underline{\calG}^{(2)}_{\underline{0}} &= \{(\ul{0},\ul{0}),(\ul{1},\ul{1})\}\\
     \calG^{(2)}_{\ul{1}} &= \{(\ul{1},\ul{0}),(\ul{0},\ul{1})\}
\end{align*}
Let $\mu_\ct$ denote the counting measure, $\mu_\ct(A) = \#A$. The unique Haar system (up to normalization) on the quotient is given by letting $\nu_{\ul{0}} = \nu_{\ul{1}} = \frac12\mu_\ct$, and of course this corresponds to the natural Haar measure $\mu_\ct$ on $\mathbb{Z}_2$.
A relational Haar system can now be given by assigning, for $\ul{g}_1\ul{g}_2 = q(g) \in \ul{\calG}$, probability measures $(\mu_g)_{\ul{g}_1\ul{g}_2}$ on the fibers $q^{-1}_g(\ul{g}_1,\ul{g}_2) = \{\ul{g}_1,\ul{g}_1+2\} \times \{\ul{g}_2,\ul{g}_2+2\}$ for $\ul{g}_1\ul{g}_2= q(g)$. An obvious choice is to assign $(\mu_g)_{\ul{g}_1\ul{g}_2} = \frac14\mu_\ct$. This yields the family of measures $\left(\mu_g=\frac{1}{8}\mu_\ct\right)_{g\in\mathbb{Z}_4}$ 
which is strongly split, with $\tau_{\ul{0}} = \tau_{\ul{1}} = \frac12\mu_\ct$. But other choices are possible: For instance, we can define a split, but not $L_2$-invariant measure by setting $\tau^g_{\ul{g}_1} = \delta_g$--the Dirac measure at $g \in q^{-1}(\ul{g}_1)$ - whenever $q(g) = \ul{g}_1$ and $\tau^g_{\ul{g}_1} = \frac12\mu_\ct$ otherwise. Concretely, we have 
$$\tau^0_{\ul{0}} = \delta_0, \qquad \tau^0_{\ul{1}} = \frac12\mu_\ct, \qquad \tau^2_{\ul{0}} = \delta_2, \qquad \tau^2_{\ul{1}} = \frac12\mu_\ct$$
and similarly for $\tau^1_{\ul{g}_1}$ and $\tau^3_{\ul{g}_1}$.
On the other hand, we can define an $L_2$-invariant system which is not split by letting $(\mu_0)_{\ul{g}_1\ul{g}_2} = (\mu_2)_{\ul{g}_1\ul{g}_2}$ any probability measure on $q^{-1}(\ul{g}_1,\ul{g}_2)$ which is not a product measure, for instance 
$(\mu_0)_{\ul{0}\ul{0}}(0,0) = (\mu_0)_{\ul{0}\ul{0}}(2,0) = (\mu_0)_{\ul{0}\ul{0}}(0,2) = \frac13, (\mu_0)_{\ul{0}\ul{0}}(2,2) = 0$, and similarly for the other probability measures. 
\end{ex}
\begin{ex}
Consider again the example of the relational group $\mathcal G$ corresponding to $n\Z \triangleleft \Z$ as in Example \ref{ex:discrete}. In this case, assigning for $l\in \Z$ the counting measure $\mu_\ct$ on $\calG^{(2)}_l$ does not provide an example of relational Haar system, since it does not disintegrate in the sense of Definition \ref{def:Disintegration}. The point is that the fibers are infinite and thus the ``measures along the fibers'' $(\mu_l)_{\underline{l}_1\underline{l}_2}$ are not probability measures. 
One possibility to define a (strongly split) relational Haar system is to define probability measures $\tau_{\underline{l}}$ on $l + n\Z$, for $\underline{l} \in \Z_n$ (for instance the Dirac measure supported at some representative). However, these measures will necessarily fail to be translation-invariant.
\end{ex}
\subsection{The relational convolution algebra}
Given the definition of a relational Haar system, we now define a generalization of the convolution algebra to the case of relational groupoids. To define a set of functions on which the convolution product is well-defined, we will from now on assume that our relational groupoids are equipped with a topology.  A relational Haar systems is assumed to be continuous\footnote{One way to phrase this is that for every continuous function $f$ on $\mathcal{G}^{(3)}$ the function $k \mapsto \int_{(g,h) \in \mathcal{G}^{(2)}}f(g,h,k)d\mu_k$ is continuous} and given by Radon measures with respect to this topology. The following subspace of functions is a convenient one: 
\begin{defn}[Admissible functions]
Let $\mathcal{G}$ be a topological relational groupoid and $\mu= (\mu_g)_{g \in \mathcal{G}}$ a Haar system given by Radon measures. Denote $\underline{\calG}$ the quotient groupoid of $\mathcal{G}$ equipped with the quotient topology. Then a continuous function $f$ is \emph{admissible} if it is bounded and there is a compact set $\underline{\calK} \subset \underline{\calG}$ such that $\mathrm{supp}(f) \cap \mathcal{C} \subset q^{-1}(\underline{\calK})$. The set of admissible functions is denoted by $\mathcal{A}(\mathcal{G})$.
\end{defn}
The point about admissible functions is that they have a well-defined convolution product which is again an admissible function. 
\begin{prop}
Let $f_1,f_2 \in \mathcal{A}(\mathcal{G})$. Then the convolution product defined by \begin{equation}
    (f_1\star f_2)(g) = \int_{\mathcal{G}^{(2)}_g}f_1(h)f_2(k)d\mu_g(h,k)\label{eq:defprod}
    \end{equation}
converges for all $g \in \mathcal{G}$ and $g \mapsto (f_1 \star f_2)(g) \in \mathcal{A})(\calG)$. 
\end{prop}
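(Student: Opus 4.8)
The plan is to verify the two assertions separately: first that the integral defining $(f_1\star f_2)(g)$ converges for every $g\in\mathcal{G}$, and second that the resulting function is again admissible, i.e.\ bounded with support contained in $q^{-1}(\underline{\calK})$ for a suitable compact $\underline{\calK}\subset\underline{\calG}$.

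\textbf{Convergence.} Fix $g\in\mathcal{G}$. If $g\notin\calC$ then $\mathcal{G}^{(2)}_g=\emptyset$ and $(f_1\star f_2)(g)=0$, so there is nothing to prove. If $g\in\calC$, I would use the disintegration property (condition (\ref{item:haarsys3}) of Definition \ref{def:RelHaarSys}): the measure $\mu_g$ disintegrates with respect to $q_g\colon \mathcal{G}^{(2)}_g\to\underline{\calG}^{(2)}_{\ul g}$ over the Haar measure $\nu_{\ul g}$ on $\underline{\calG}^{(2)}_{\ul g}$, with probability measures $(\mu_g)_{\ul g_1\ul g_2}$ along the fibres. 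Thus
\begin{equation}
(f_1\star f_2)(g) = \int_{\underline{\calG}^{(2)}_{\ul g}}\left(\int_{q_g^{-1}(\ul g_1,\ul g_2)} f_1(h)f_2(k)\,d(\mu_g)_{\ul g_1\ul g_2}(h,k)\right)d\nu_{\ul g}(\ul g_1,\ul g_2).
\end{equation}
Since $f_1,f_2$ are bounded, say $\lvert f_i\rvert\le C_i$, the inner integral is bounded in absolute value by $C_1C_2$ (the fibre measures being probability measures), so the outer integrand is dominated by the constant $C_1C_2$. Convergence then follows as soon as $\nu_{\ul g}(\underline{\calG}^{(2)}_{\ul g})<\infty$ on the relevant set; here one uses that $\nu$ is a Haar system by Radon measures together with the support condition on $f_1,f_2$, which restricts the effective domain of the $\nu_{\ul g}$-integral to a set whose image under the two projections lands in compacts. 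Concretely, the integrand vanishes unless $\ul g_1\in\pr_1(\underline{\calK}_1)$-type compact sets coming from $\mathrm{supp}(f_1),\mathrm{supp}(f_2)$, and finiteness of Haar measure on compacts of a Lie groupoid finishes this step.

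\textbf{Admissibility of the product.} Boundedness is immediate from the estimate above: $\lvert (f_1\star f_2)(g)\rvert \le C_1 C_2\,\nu_{\ul g}(K)$ for the compact $K$ cut out by the supports, and one should check this bound is uniform in $g$ — this uses continuity of the Haar system (the footnoted continuity hypothesis) so that $\ul g\mapsto \nu_{\ul g}(K)$ is bounded on compacts, combined with the support control. For the support condition, let $\underline{\calK}_i$ be compact sets in $\underline{\calG}$ with $\mathrm{supp}(f_i)\cap\calC\subset q^{-1}(\underline{\calK}_i)$. If $(f_1\star f_2)(g)\ne 0$ then $g\in\calC$ and there exist $(h,k)\in\mathcal{G}^{(2)}_g$ with $f_1(h)f_2(k)\ne0$, hence $h\in q^{-1}(\underline{\calK}_1)$ and $k\in q^{-1}(\underline{\calK}_2)$ (note $h,k\in\calC$ since they are composable and compose to $g\in\calC$; this should be recorded, possibly referencing Proposition \ref{prop:ginC}). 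Passing to the quotient groupoid, $\ul g = \ul h\,\ul k$ with $\ul h\in\underline{\calK}_1$, $\ul k\in\underline{\calK}_2$, so $\ul g\in m(\underline{\calK}_1\times_{\underline{\calG}}\underline{\calK}_2)$ where $m$ is the groupoid multiplication; set $\underline{\calK}:=m((\underline{\calK}_1\times\underline{\calK}_2)\cap\underline{\calG}^{(2)})$, which is compact as the continuous image of a closed subset of a compact set in a Lie groupoid. Then $\mathrm{supp}(f_1\star f_2)\cap\calC\subset q^{-1}(\underline{\calK})$, as required. Finally, continuity of $f_1\star f_2$ should follow from the continuity hypothesis on the relational Haar system together with dominated convergence, using that $f_1\otimes f_2$ is a continuous bounded function on $\mathcal{G}^{(2)}$ and the support restriction makes the dominating function integrable.

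\textbf{Main obstacle.} The delicate point is the continuity of $f_1\star f_2$ and the uniformity of the bound across the fibres, because the fibres $\mathcal{G}^{(2)}_g$ vary with $g$ and the measures $\mu_g$ are only tied together through their pushforwards $\nu_{\ul g}$; the ``along the fibre'' measures $(\mu_g)_{\ul g_1\ul g_2}$ are essentially unconstrained. One must therefore carefully invoke the continuity convention on the Haar system (stated in the footnote, phrased in terms of integrals of continuous functions on $\mathcal{G}^{(3)}$) rather than any pointwise continuity of $g\mapsto\mu_g$, and thread the support hypotheses through so that all integrals are effectively over compact sets where Radon measures are finite. I expect the bookkeeping with $\calC$, the quotient map $q$, and the identification $\mathcal{G}^{(2)}_g/(L_2\times L_2)=\underline{\calG}^{(2)}_{q(g)}$ to be the part requiring the most care, while the analytic estimates are routine once the measure-theoretic setup is in place.
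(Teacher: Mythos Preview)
Your proposal is correct and follows essentially the same route as the paper: both case-split on $g\in\calC$, use disintegration of $\mu_g$ over $\nu_{\ul g}$ to reduce to a fibre integral (bounded by $\|f_1\|\|f_2\|$ since the fibre measures are probability measures) followed by a base integral controlled by compact support in $\underline{\calK}_1\times\underline{\calK}_2$, and take $\underline{\calK}=m((\underline{\calK}_1\times\underline{\calK}_2)\cap\underline{\calG}^{(2)})$ for the support of the product. Your explicit flagging of the need for $h,k\in\calC$ (which follows from $L_3=L_3\circ(L_2\times L_2)$) and of the continuity/uniform-bound issue is in fact more careful than the paper, which treats both points rather tersely.
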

\begin{proof}
First, notice that $(f_1 \star f_2)(g)=0$ if $g \notin \mathcal{C}$ - this follows from the fact that $\mathcal{G}^{(2)}_g = \emptyset$ in this case (see Proposition \ref{prop:ginC}). Otherwise, we use the axiom that the Haar measure $\mu_g$ disintegrates to write 
\begin{equation}
(f_1 \star f_2)(g) = \int_{(\underline{g}_1,\underline{g}_2) \in \underline{\calG}^{(2)}_{q(g)}}\underbrace{\int_{q^{-1}((\underline{g}_1,\underline{g}_2))}f_1(h)f_2(k)d\mu_{\underline{g}\underline{g}_1\underline{g}_2}(h,k)}_{=:\tilde{f}(\underline{g}_1,\underline{g}_2)}d\nu_{\underline{g}}(\underline{g}_1,\underline{g}_2)\end{equation}
Since $f_1$ and $f_2$ are bounded, integrating $f_1\cdot f_2$ along the fibers results in a bounded continuous function $\tilde{f}$ on the base $\underline{\calG}^{(2)}_{q(g)}$ and $|\tilde{f}|\leq |f_1||f_2|$. Let $\underline{\calK}_i$ denote compact sets containing $q(\mathrm{supp}(f_i)).$ The support of $\tilde{f}$ is contained in $\underline{\calK}_1 \times \underline{\calK}_2$, which is compact. Since $\mu_q(g)$ is also a Radon measure, the integral is finite and in fact $|(f_1 \star f_2)(g)|\leq \nu_{\underline{g}}(\underline{\calK}_1 \times \underline{\calK}_2)|f_1 \times f_2|$. Hence $f_1 \star f_2$ is defined pointwise. By continuity of the Haar system, it follows that $f_1 \star f_2$ is also continuous. It remains to check that $f_1 \star f_2$ is admissible. Let $m$ denote the multiplication in the quotient groupoid, and denote $\underline{\calK}$ the set $m((\underline{\calK}_1 \times \underline{\calK}_2)\cap \underline{\calG}^{(2)})$. Then $\underline{\calK}$ is compact since the multiplication is continuous. The arguments above imply that $\mathrm{supp}(f_1 \star f_2) \subset q^{-1}(\underline{\calK})$, and hence in the preimage of a compact set. To see that $f_1 \star f_2$ is bounded, note that $|(f_1 \star f_2)|\leq |f_1||f_2| \mathrm{sup}_{\underline{g}\in \underline{\calK}}  \nu_{\underline{g}}(\underline{\calK})$. Again by compactness, the supremum is obtained and $f_1 \star f_2$ is bounded, and hence admissible.
\end{proof}
The definition of the convolution algebra is now straightforward:
\begin{defn}
Let $(\mathcal G,L,I)$ be a relational groupoid with a relational Haar system $(\mu_g)_{g\in\mathcal{G}}$. 
Denote by $\mathcal{A}(\mathcal{G})$ the space of admissible functions. The relational convolution algebra is then $(\mathcal{A}(\mathcal{G}),\star)$ with the convolution product defined in Equation \eqref{eq:defprod}.
\end{defn}


\begin{rem}
Another possible domain for the convolution product is given by continuous functions which are just compactly supported. This is a subspace of the space of admissible functions. However, below we want to consider $L_2$-invariant functions, to recover the algebra of functions on the quotient. If the $L_2$-fibers are not compact (for instance, this happens for the relational group associated to $\calG = \mathbb{Z},\calH= k\mathbb{Z}$), then such functions will never be compactly supported. This motivates our choice of space of admissible functions. Of course, for compact relational groupoids the spaces of admissible, compactly supported, and continuous functions all coincide. 
\end{rem}
\subsection{Associativity}
A natural question when defining an algebra is whether or not it is associative. This is the question we investigate in this subsection. It turns out that in general, the convolution is associative only when restricted to $L_2$-invariant functions, or when the Haar system satisfies some restrictive condition. 
\begin{prop}\label{prop:L2ass}
The space of $L_2$-invariant functions is an associative subalgebra of $(\mathcal{A}(\mathcal{G}),\star)$.
\end{prop}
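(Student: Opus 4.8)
The plan is to prove associativity in two stages: first show that the subspace of $L_2$-invariant admissible functions is closed under $\star$, and then verify the associativity identity $(f_1\star f_2)\star f_3 = f_1\star(f_2\star f_3)$ by disintegrating every convolution integral down to the quotient groupoid $\underline{\calG}$, where associativity is already known (it follows from the associativity of the ordinary groupoid convolution product proved in the Proposition after Definition~\ref{eq:convolution_product}).

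For the closure claim, suppose $f_1,f_2$ are $L_2$-invariant and admissible. We already know from the previous Proposition that $f_1\star f_2$ is admissible, so the only thing to check is $L_2$-invariance: if $(g,g')\in L_2$, then $(f_1\star f_2)(g) = (f_1\star f_2)(g')$. The key observation is that when $f_1,f_2$ are $L_2$-invariant, the integrand $f_1(h)f_2(k)$ in \eqref{eq:defprod} only depends on the $L_2$-classes of $h$ and $k$, i.e. it descends to a function $\tilde f$ on $\underline{\calG}^{(2)}_{q(g)}$, and by the disintegration formula $(f_1\star f_2)(g) = \int_{\underline{\calG}^{(2)}_{q(g)}}\tilde f\, d\nu_{\underline g}$, which manifestly depends only on $\underline g = q(g)$. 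Since $(g,g')\in L_2$ forces $q(g)=q(g')$ (this is built into the relational Haar system via condition (i), and more fundamentally $L_2$ is the equivalence relation whose quotient defines $\underline{\calG}$), we are done. In fact this computation shows more: the map $f\mapsto \underline f$ sending an $L_2$-invariant admissible function to its descent on $\underline{\calG}$ intertwines $\star$ with the ordinary groupoid convolution $\star$ on $\underline{\calG}$ equipped with the Haar system $(\nu_{\underline g})$. Concretely, $\underline{f_1\star f_2} = \underline{f_1}\star\underline{f_2}$, because the disintegrated expression above is exactly the defining integral \eqref{eq:convolution_product} for the convolution on $\underline{\calG}$ (after identifying $\underline{\calG}^{(2)}_{\underline g}$ with a source fiber via $(\underline g_1,\underline g_2)\mapsto \underline g_2$, using $\underline g_1 = \underline g\circ \underline g_2^{-1}$).

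Granting this algebra homomorphism statement, associativity is immediate: for $L_2$-invariant admissible $f_1,f_2,f_3$ we have
\[
\underline{(f_1\star f_2)\star f_3} = \underline{f_1\star f_2}\star\underline{f_3} = (\underline{f_1}\star\underline{f_2})\star\underline{f_3} = \underline{f_1}\star(\underline{f_2}\star\underline{f_3}) = \underline{(f_1\star(f_2\star f_3))},
\]
using associativity on $\underline{\calG}$ in the middle, and since an $L_2$-invariant function is determined by its descent (it is the pullback $q^*$ of a function on $\underline{\calG}$ on the constraint set $\calC$, and vanishes off $\calC$), we conclude $(f_1\star f_2)\star f_3 = f_1\star(f_2\star f_3)$.

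The main obstacle is bookkeeping rather than conceptual: one must carefully justify the identification $\underline{f_1\star f_2}=\underline{f_1}\star\underline{f_2}$, which requires (a) noting that the ``measures along the fibers'' $(\mu_g)_{\underline g_1\underline g_2}$ are probability measures, so integrating the $L_2$-invariant (hence fiberwise constant) integrand over a fiber just reproduces its value, and (b) matching the fibered integral $\int_{\underline{\calG}^{(2)}_{q(g)}}\tilde f\,d\nu_{\underline g}$ with the source-fiber integral in \eqref{eq:convolution_product} under the parametrization $\underline g_1 = q(g)\circ\underline g_2^{-1}$, which is precisely where condition (ii) of Definition~\ref{def:RelHaarSys} (that $\nu_{\underline g}$ is a \emph{right} Haar system) gets used. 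One should also record explicitly that $L_2$-invariant admissible functions are exactly pullbacks along $q$ of functions on $\underline{\calG}$ with appropriate support/boundedness, so that the final step ``equal descents imply equal functions'' is valid. None of these steps is hard, but they are the substance of the argument.
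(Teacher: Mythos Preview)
Your approach is correct and essentially identical to the paper's: the paper packages the key computation as a separate lemma (Lemma~\ref{lem:L2conv}) asserting $f_1\star f_2 = q^*(\tilde f_1 \star_{\underline{\calG}} \tilde f_2)$ on $\calC$ and $0$ off $\calC$, from which both closure and associativity follow exactly as you describe via associativity on the quotient. One minor imprecision: not every $L_2$-invariant admissible function vanishes off $\calC$, but the convolution products appearing in your final chain of equalities do (since $\calG^{(2)}_g=\emptyset$ for $g\notin\calC$), so your ``equal descents imply equal functions'' step is valid for those particular functions.
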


Notice that for an $L_2$-invariant function $f$ there is a well-defined continuous function $\tilde{f}$ on the quotient groupoid $\underline{\mathcal{G}}$, defined by $\tilde{f}(\underline{g}) = f(g)$. By Axiom (\ref{item:haarsys2}) of Definition \ref{def:RelHaarSys}, to a Haar system $\mu$ on $\mathcal{G}$ there is an associated Haar system $\nu$ on $\underline{\mathcal{G}}$. Let us denote its convolution product on compactly supported functions by $\star_{\underline{\mathcal{G}}}$. The proof of Proposition \ref{prop:L2ass} is a direct corollary of the following Lemma: 
\begin{lem}\label{lem:L2conv}
For $L_2$-invariant functions $f_1,f_2 \in \mathcal{A}(\mathcal{G})$, the convolution is given by
\begin{equation}
\label{eq:L2conv}
f_1 \star f_2= \begin{cases} q^*(\tilde{f}_1 \star_{\underline{\mathcal{G}}} \tilde{f}_2)&\text{on}\,\,\, \mathcal{C}\\ 0&\text{on}\,\,\, \mathcal{G}\setminus\mathcal{C}\end{cases}
\end{equation}
\end{lem}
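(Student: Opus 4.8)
The plan is to unwind the definition of the convolution product \eqref{eq:defprod} for $L_2$-invariant inputs and show that the fiber integral over $q_g^{-1}(\underline{g}_1,\underline{g}_2)$ collapses to a pointwise evaluation, so that the remaining integral over $\underline{\calG}^{(2)}_{q(g)}$ against $\nu_{\underline{g}}$ is literally the groupoid convolution on $\underline{\calG}$. First I would dispose of the trivial case: if $g \notin \calC$, then $\mathcal{G}^{(2)}_g = \emptyset$ (Proposition~\ref{prop:ginC}), so $(f_1 \star f_2)(g) = 0$, matching the second branch of \eqref{eq:L2conv}.

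For $g \in \calC$, I would start from the disintegration formula already derived in the proof of the previous proposition, namely
\begin{equation*}
(f_1 \star f_2)(g) = \int_{(\underline{g}_1,\underline{g}_2) \in \underline{\calG}^{(2)}_{q(g)}}\left(\int_{q_g^{-1}(\underline{g}_1,\underline{g}_2)}f_1(h)f_2(k)\,d(\mu_g)_{\underline{g}_1\underline{g}_2}(h,k)\right)d\nu_{\underline{g}}(\underline{g}_1,\underline{g}_2).
\end{equation*}
The key observation is that every $(h,k) \in q_g^{-1}(\underline{g}_1,\underline{g}_2)$ satisfies $q(h) = \underline{g}_1$ and $q(k) = \underline{g}_2$; in particular $(h,h')\in L_2$ for any other $h'$ in the same fiber, and likewise for $k$. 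Since $f_1,f_2$ are $L_2$-invariant, $f_1(h) = \tilde{f}_1(\underline{g}_1)$ and $f_2(k) = \tilde{f}_2(\underline{g}_2)$ are constant on the fiber. Pulling these constants out, and using that $(\mu_g)_{\underline{g}_1\underline{g}_2}$ is a probability measure (condition \eqref{item:haarsys3} of Definition~\ref{def:RelHaarSys}), the inner integral evaluates to $\tilde{f}_1(\underline{g}_1)\tilde{f}_2(\underline{g}_2)$. Hence
\begin{equation*}
(f_1 \star f_2)(g) = \int_{\underline{\calG}^{(2)}_{q(g)}}\tilde{f}_1(\underline{g}_1)\tilde{f}_2(\underline{g}_2)\,d\nu_{\underline{g}}(\underline{g}_1,\underline{g}_2).
\end{equation*}

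Finally I would match this against the groupoid convolution $\star_{\underline{\calG}}$ on $\underline{\calG}$: writing its definition \eqref{eq:convolution_product} in the $(s,t)$-fiber form used earlier, $(\tilde{f}_1 \star_{\underline{\calG}} \tilde{f}_2)(\underline{g})$ is exactly the integral of $\tilde{f}_1(\underline{g}_1)\tilde{f}_2(\underline{g}_2)$ over pairs $(\underline{g}_1,\underline{g}_2)$ composing to $\underline{g}$ against the Haar system $\nu$, which by condition \eqref{item:haarsys2} is precisely $\nu_{\underline{g}}$ on $\underline{\calG}^{(2)}_{\underline{g}}$. Therefore $(f_1 \star f_2)(g) = (\tilde{f}_1 \star_{\underline{\calG}} \tilde{f}_2)(q(g)) = q^*(\tilde{f}_1 \star_{\underline{\calG}} \tilde{f}_2)(g)$, as claimed. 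The main obstacle is bookkeeping rather than conceptual: one must be careful that $\tilde{f}_i \in \mathcal{A}(\mathcal{G})$ descends to a compactly-supported continuous function on $\underline{\calG}$ (so that $\star_{\underline{\calG}}$ is defined on it) — this follows from admissibility, since $q(\mathrm{supp}(f_i))$ is contained in a compact $\underline{\calK}_i$ — and that the translation between the source-fiber form \eqref{eq:convolution_product} of $\star_{\underline{\calG}}$ and the $(s,t)$-fiber form matches the normalization of $\nu_{\underline{g}}$ coming from the disintegration; both are routine given the equivalence of the two descriptions of groupoid Haar systems recalled in Section~\ref{subsec:Groupoids}.
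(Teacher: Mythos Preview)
Your proof is correct and follows essentially the same approach as the paper: disintegrate $\mu_g$ along $q_g$, use $L_2$-invariance to see that $f_1(h)f_2(k)$ is constant on each fiber so the inner integral against the probability measure $(\mu_g)_{\underline{g}_1\underline{g}_2}$ collapses to $\tilde f_1(\underline{g}_1)\tilde f_2(\underline{g}_2)$, and identify the remaining integral against $\nu_{\underline g}$ with the groupoid convolution on $\underline{\calG}$. Your write-up is in fact slightly more careful than the paper's, in that you explicitly verify that $\tilde f_i$ is compactly supported and flag the matching of the two equivalent presentations of the Haar system on $\underline{\calG}$.
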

\begin{proof}
First, we recall that by definition the support of the convolution of any two admissible functions is contained in $\mathcal{C}$, since $\calG^{(2)}_g = \emptyset$ for $g \in \calG$.
Notice further that if $f_1,f_2 \in \mathcal{A}(\mathcal{G})$ are $L_2$-invariant then by definition their associated functions on $\underline{\mathcal{G}}$ are compactly supported, hence the right hand side of Equation \eqref{eq:L2conv} is defined. Now, we use again the fact that the Haar system disintegrates to write 
\begin{align*}
(f_1 \star f_2)(g) &= \int_{(\underline{g}_1,\underline{g}_2) \in \underline{\mathcal{G}}^{(2)}_{q(g)}}\int_{q^{-1}((\underline{g}_1,\underline{g}_2))}f_1(h)f_2(k)d\mu_{\underline{g}\underline{g}_1\underline{g}_2}(h,k)d\nu_{\underline{g}}(\underline{g}_1,\underline{g}_2)  \\
&= \int_{(\underline{g}_1,\underline{g}_2) \in \underline{\mathcal{G}}^{(2)}_{q(g)}}\tilde{f}_1(\underline{g}_1)\tilde{f}_2(\underline{g}_2) d\nu_{\underline{g}}(\underline{g}_1,\underline{g}_2) \\
&= (\tilde{f}_1 \star_{\underline{\mathcal{G}}} \tilde{f}_2) (\underline{g}).
\end{align*}
Here, the second equation follows from normalization of the disintegrating measures and the fact that $L_2$-invariance is equivalent to being constant on fibers of $q$. Integrating a constant function against a probability measure, we simply obtain that constant. 
\end{proof}
By associativity of the convolution in the quotient groupoid, we obtain Proposition \ref{prop:L2ass}. \\
Next, we give a sufficient criterion on the Haar system for the convolution algebra to be associative:
\begin{prop}
Suppose that the relational Haar system $\mu$ is 
strongly split as in Definition \ref{def:strongsplit}.
Then the convolution algebra $\mathcal{A}(\mathcal{G})$  is associative. 
\end{prop}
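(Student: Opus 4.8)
The plan is to compute $((f_1\star f_2)\star f_3)(g)$ and $(f_1 \star (f_2 \star f_3))(g)$ separately using the strong splitting and show they coincide. First I would record the following consequence of strong splitting: for any $g \in \calC$, by Definition \ref{def:strongsplit} together with the disintegration axiom (\ref{item:haarsys3}) of Definition \ref{def:RelHaarSys}, one has, for admissible $\varphi$,
\begin{equation*}
(f\star \varphi)(g) = \int_{(\ul g_1,\ul g_2)\in\ul\calG^{(2)}_{q(g)}} \left(\int_{q^{-1}(\ul g_1)} f\, d\tau_{\ul g_1}\right)\left(\int_{q^{-1}(\ul g_2)} \varphi\, d\tau_{\ul g_2}\right) d\nu_{\ul g}(\ul g_1,\ul g_2).
\end{equation*}
In other words, if we set $\hat f(\ul g_1) := \int_{q^{-1}(\ul g_1)} f\, d\tau_{\ul g_1}$ (a bounded, compactly supported function on $\ul\calG$, since $\tau_{\ul g_1}$ are probability measures and $f$ is admissible), then $(f_1 \star f_2)(g) = (\hat f_1 \star_{\ul\calG} \hat f_2)(q(g))$ for $g \in \calC$, and vanishes off $\calC$. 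The crucial point is that $\hat f$ does not depend on $g$ — only on $q(g)$ — precisely because the splitting is \emph{strong}.

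The second step is to iterate this. Since $f_1 \star f_2$ equals $q^*(\hat f_1 \star_{\ul\calG} \hat f_2)$ on $\calC$ and $0$ elsewhere, its "hat" is $\widehat{f_1 \star f_2}(\ul g_1) = \int_{q^{-1}(\ul g_1)} (f_1\star f_2)\, d\tau_{\ul g_1} = (\hat f_1 \star_{\ul\calG} \hat f_2)(\ul g_1)$, again using that $\tau_{\ul g_1}$ is a probability measure supported on $q^{-1}(\ul g_1) \subset \calC$ (the inclusion $q^{-1}(\ul g_1)\subset\calC$ holds by construction of $\ul\calG = \calC/L_2$) and that $f_1 \star f_2$ is constant equal to $(\hat f_1\star_{\ul\calG}\hat f_2)(\ul g_1)$ on that fiber. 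Therefore
\begin{equation*}
((f_1\star f_2)\star f_3)(g) = \big((\hat f_1 \star_{\ul\calG} \hat f_2) \star_{\ul\calG} \hat f_3\big)(q(g))
\end{equation*}
for $g\in\calC$, and $0$ otherwise. An identical computation gives $(f_1\star(f_2\star f_3))(g) = \big(\hat f_1 \star_{\ul\calG} (\hat f_2 \star_{\ul\calG} \hat f_3)\big)(q(g))$ for $g\in\calC$. Now invoke associativity of $\star_{\ul\calG}$, which is the standard groupoid convolution product on $\ul\calG$ with its Haar system $\nu$ — this was established in the Proposition following \eqref{eq:convolution_product} — to conclude the two sides agree.

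The main obstacle, and the place requiring care rather than cleverness, is the bookkeeping with disintegration: one must justify that $\int_{q^{-1}(\ul g_1,\ul g_2)} f_1(h)f_2(k)\, d(\mu_g)_{\ul g_1\ul g_2}(h,k)$ genuinely factors as $\hat f_1(\ul g_1)\hat f_2(\ul g_2)$ under strong splitting, i.e. that $(\mu_g)_{\ul g_1 \ul g_2} = \tau_{\ul g_1}\times\tau_{\ul g_2}$ lets Fubini apply — this is fine since $f_1,f_2$ are bounded and the $\tau$'s are finite, but it should be stated. A second minor point is the appeal to $q^{-1}(\ul g_1)\subseteq\calC$ so that $f_1\star f_2$ is indeed determined by its values on $\calC$ when we integrate against $\tau_{\ul g_1}$; this is immediate from the definition of the constraint set and of $\ul\calG$. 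Beyond that, the argument is a formal reduction to associativity in the quotient groupoid, so no genuinely hard estimate is involved.
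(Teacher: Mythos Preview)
Your proof is correct and follows essentially the same approach as the paper: your map $f\mapsto\hat f$ is precisely the paper's extended pushforward $q_*f(\ul g)=\int_{q^{-1}(\ul g)}f\,d\tau_{\ul g}$, and both arguments reduce associativity to the identity $f_1\star f_2 = q^*(q_*f_1 \star_{\ul\calG} q_*f_2)$ together with $q_*\circ q^*=\id$ and associativity of $\star_{\ul\calG}$. If anything, your write-up is more explicit about the Fubini step and the inclusion $q^{-1}(\ul g_1)\subset\calC$ than the paper's own proof.
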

\begin{proof}
Another way to write formula \eqref{eq:L2conv} is 
\begin{equation}
f_1 \star f_2 = q^*(q_*f_1 \star_{\underline{\mathcal{G}}} q_*f_2)\label{eq:pushforwardconv}
\end{equation}
where we denoted $\tilde{f} =: q_*f$ the pushforward of $L_2$-invariant functions. Associativity of the convolution restricted to $L_2$-invariant functions follows from the property \begin{equation} q_* \circ q^* = \mathrm{id}\label{eq:prop}
\end{equation}
i.e. $q_*$ is defined as the left inverse of the injective map $q^*\colon C_c(\underline{\mathcal{G}}) \to \mathcal{A}(\mathcal{G})$ on its image. Thus, the convolution is associative on all function is we can find an extension of the map $q_*$ to all of $\mathcal{A}(\mathcal{G})$ in such a way that property \eqref{eq:pushforwardconv} is still true.  In this case, there is a family of probability measures $\tau_{\underline{g}}$ on $\underline{\mathcal{G}}$ and we can extend $q_*$ by setting 
$$q_*f (\underline{g}) = \int_{q^{-1}(\underline{g})}f(g)d\tau_{\underline{g}}(g)$$
and again, we have properties \eqref{eq:pushforwardconv} and \eqref{eq:prop}. Thus associativity again follows from associativity of the convolution algebra of the quotient.
\end{proof}

\begin{rem}
In the two examples above, we were able to infer associativity from the fact that the convolution algebra on the quotient is associative. However, both of these examples are ``well behaved'' with respect to the quotient, since they satisfy the strongly split condition. Already on the simplest relational groupoid with nontrivial $L_2$ relation, one can find examples of (split, invariant) relational Haar systems such that the convolution algebra is not associative, see Example \ref{ex:non-ass} below. However, since the $L_3$ relation in the relational groupoid is strictly associative, and $L_2$-invariant, one might ask whether there is a remnant of this fact that is visible at the level of the convolution algebra, i.e. whether the algebra is ``associative up to homotopy'' in general. This is an interesting question that the authors plan to address in a future paper.
\end{rem}

The relational convolution algebra associated to a split $L_2$-invariant relational Haar system which is not strongly split is not necessarily associative, as follows from the counterexample below.
\begin{ex}\label{ex:non-ass}
We consider again the relational group given by $\calG = \mathbb{Z}_4, \calH= \mathbb{Z}_2$. Denote $\delta_g$ the function with $\delta_g(g) = 1$ and $\delta_g(h) = 0$ for $h \neq g$. Unraveling the definitions, we can compute 
\begin{align*}
    \delta_0 \star \delta_0 &= \mu_0(0,0)\delta_0 + \mu_2(0,0)\delta_2 \\
    \delta_0 \star \delta_1 &= \mu_1(0,1) \delta_1 + \mu_3(0,1)\delta_3 \\ 
    \delta_2 \star \delta_1 &= \mu_1(2,1) \delta_1 + \mu_3(2,1)\delta_3 \\ 
    \delta_0 \star \delta_3 &= \mu_1(0,3) \delta_1 + \mu_3(0,3) \delta_3 
\end{align*}
\begin{multline*}
    \delta_0 \star (\delta_0 \star \delta_1) = (\underbrace{\mu_1(0,1)^2 + \mu_3(0,1)\mu_1(0,3)}_A)\cdot\delta_1 \\+ (\underbrace{\mu_1(0,1)\mu_3(0,1) + \mu_3(0,1)\mu_3(0,3)}_B)\cdot \delta_3 
\end{multline*}
\begin{multline*}
    (\delta_0 \star \delta_0) \star \delta_1 = (\underbrace{\mu_0(0,0)\mu_1(0,1) + \mu_2(0,0)\mu_1(2,1)}_C)\cdot\delta_1 \\+ (\underbrace{\mu_0(0,0)\mu_3(0,1)  + \mu_2(0,0)\mu_3(2,1)}_D)\cdot \delta_3
\end{multline*} 
The condition for $\mu$ being a relational Haar system implies that $$\mu_g(g_1,g_2) = \nu_{\ul{g}}(\mu_g)_{\ul{g}_1\ul{g}_2}(g_1,g_2) = \frac12(\mu_g)_{\ul{g}_1\ul{g}_2}(g_1,g_2)$$
where $(\mu_g)_{\ul{g}_1\ul{g}_2}(g_1,g_2)$ is some probability measure on $q^{-1}(\ul{g}_1,\ul{g}_2)$. It is clear that in general $A,B,C,D$ are pairwise different. If we suppose the measure is $L_2$-invariant, we obtain 
\begin{align*}
    A &= B = \mu_1(0,1)^2 + \mu_1(0,1)\mu_1(0,3), \\
    C &= D = \mu_0(0,0)(\mu_1(0,1)+ \mu_1(2,1).
\end{align*}
Furthermore, if we suppose that $\mu$ is split, we obtain $\mu_0(0,0) =\frac14 \tau^0_{\ul{0}}(0)^2$. Letting $\tau^0_{\ul{0}}$ be the measure on $\{0,2\}$ supported at 2, we see that $C=D=0$ (in fact in this case $\delta_0\star\delta_0 = 0$). On the other hand, $A=B$ depends only on $\tau^1_{\ul{0}}$ and $\tau^1_{\ul{1}}$. Thus, this is an example of a split $L_2$-invariant Haar system with non-associative convolution algebra.
\end{ex}

\subsection{Reduction of the algebra of admissible functions}
In this subsection we show that the reduction of the relational convolution algebra is isomorphic to the convolution algebra of the groupoid which is the reduction of the relational groupoid.
We will obtain this result via a two-step reduction: one reduction with respect to the constraint set, followed by a reduction with respect to the $L_2$ relation.
Let $(\mathcal G,L,I)$ be a relational groupoid, and let $\mathcal C$ be its constraint set.
We denote by $I_{\mathcal C}$ the subset of functions in  $C_c(\mathcal{G})$ that vanish on $\mathcal C$. The set $I_\calC$ is usually called the vanishing ideal, since it is an ideal inside $C_c(\calG)$ with the standard (commutative) product. Furthermore we obtain the following isomorphism of convolution algebras.
\begin{prop}
$I_{\mathcal C}$ is also an ideal in $\mathcal A (\mathcal G)$ and 
\[\mathcal A(\mathcal{G})/I_{\mathcal C} \cong \mathcal A(\mathcal C).\]
\end{prop}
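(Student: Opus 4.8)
The plan is to exhibit an explicit algebra homomorphism $\mathcal{A}(\mathcal{G}) \to \mathcal{A}(\mathcal{C})$ whose kernel is $I_{\mathcal{C}}$ and which is surjective, then conclude by the first isomorphism theorem. The natural candidate is the restriction map $r\colon f \mapsto f|_{\mathcal{C}}$. First I would record that $r$ is well-defined into $\mathcal{A}(\mathcal{C})$: if $f$ is admissible on $\mathcal{G}$ with $\mathrm{supp}(f)\cap\mathcal{C} \subset q^{-1}(\underline{\mathcal{K}})$, then $f|_{\mathcal{C}}$ is bounded, continuous, and has support contained in $q^{-1}(\underline{\mathcal{K}})$, so it is admissible on $\mathcal{C}$ for the same compact set — here one uses that the relational Haar system on $\mathcal{G}$ restricts to one on $\mathcal{C}$, since the measures $\mu_g$ are supported on $\mathcal{G}^{(2)}_g$ and (by Proposition~\ref{prop:ginC}, to which the excerpt refers) $\mathcal{G}^{(2)}_g=\emptyset$ for $g\notin\mathcal{C}$, while for $g\in\mathcal{C}$ the composable pairs themselves lie in $\mathcal{C}\times\mathcal{C}$, so that $\mathcal{C}$ is itself a relational groupoid with the induced structure.

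\textbf{Multiplicativity and the kernel.} Next I would check that $r$ intertwines the convolution products. This is essentially immediate from the defining formula \eqref{eq:defprod}: for $g\in\mathcal{C}$,
\[
(f_1\star f_2)(g) = \int_{\mathcal{G}^{(2)}_g} f_1(h)f_2(k)\,d\mu_g(h,k),
\]
and since $\mu_g$ is supported on pairs $(h,k)$ with $(h,k,g)\in L_3$ — which, as noted above, forces $h,k\in\mathcal{C}$ — the integrand only sees the values of $f_1,f_2$ on $\mathcal{C}$. Hence $(f_1\star f_2)|_{\mathcal{C}} = (f_1|_{\mathcal{C}})\star (f_2|_{\mathcal{C}})$, computed with the restricted Haar system on $\mathcal{C}$. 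This simultaneously shows that $I_{\mathcal{C}}=\ker r$ is a two-sided ideal in $(\mathcal{A}(\mathcal{G}),\star)$: if $f_1$ vanishes on $\mathcal{C}$ then so does $f_1\star f_2$ and $f_2\star f_1$ for any $f_2$, since again the convolution at a point of $\mathcal{C}$ only integrates values on $\mathcal{C}$, and the convolution vanishes off $\mathcal{C}$ automatically. (Strictly, $I_{\mathcal{C}}$ as stated consists of functions in $C_c(\mathcal{G})$ vanishing on $\mathcal{C}$; I would either read it as the admissible functions vanishing on $\mathcal{C}$, or note that these coincide with the compactly supported ones after intersecting with $\mathcal{C}$-support conditions — the point being that $f|_{\mathcal{C}}=0$ is what matters.)

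\textbf{Surjectivity.} The remaining and most delicate point is surjectivity of $r$: given an admissible $\varphi\in\mathcal{A}(\mathcal{C})$, produce $f\in\mathcal{A}(\mathcal{G})$ with $f|_{\mathcal{C}}=\varphi$. Since $\mathcal{C} = L_2\circ\mathcal{G}$ need not be open or closed in $\mathcal{G}$ in general, a bare Tietze-type extension is not automatic; I would invoke the standing topological hypotheses (the relational groupoid is equipped with a topology, measures are Radon, and implicitly the spaces are nice enough — Radon/locally compact Hausdorff — for such extensions) to extend $\varphi$ to a bounded continuous function on $\mathcal{G}$, then multiply by a bump function to control the support so that the admissibility condition $\mathrm{supp}(f)\cap\mathcal{C}\subset q^{-1}(\underline{\mathcal{K}})$ is preserved — which is automatic since modifying $f$ outside $\mathcal{C}$ does not change $\mathrm{supp}(f)\cap\mathcal{C}$. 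I expect \textbf{this extension step to be the main obstacle}, as it is the only place where genuine point-set topology enters and where one may need to add a hypothesis (e.g. that $\mathcal{C}\hookrightarrow\mathcal{G}$ is a closed, or at least nicely embedded, subspace) for the argument to be rigorous; in the examples of interest ($\mathcal{C}$ a union of connected components, or all of $\mathcal{G}$) it is trivially true. Granting surjectivity, the first isomorphism theorem for algebras gives $\mathcal{A}(\mathcal{G})/I_{\mathcal{C}}\cong\mathcal{A}(\mathcal{C})$, completing the proof.
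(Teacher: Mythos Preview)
Your approach is essentially the same as the paper's---both rest on the restriction map and the observation that the convolution at a point of $\mathcal C$ only sees values on $\mathcal C$---but you are considerably more careful. The paper's proof is a single sentence establishing only that $I_{\mathcal C}$ is an ideal (``$f_1$ vanishes on $\mathcal C$ implies $f_1\star f_2|_{\mathcal C}=0$''); it does not address the isomorphism at all, treating it as self-evident. Your remark that composable pairs $(h,k)\in\mathcal G^{(2)}_g$ with $g\in\mathcal C$ already lie in $\mathcal C\times\mathcal C$ (which follows from $L_3 = L_3\circ(L_2\times L_2)$ and symmetry of $L_2$) is precisely the mechanism behind the paper's one-line ideal argument, which you make explicit; the paper does not spell this out.

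Your concern about surjectivity of the restriction map is legitimate and is simply not addressed in the paper: the isomorphism $\mathcal A(\mathcal G)/I_{\mathcal C}\cong\mathcal A(\mathcal C)$ is asserted without any extension argument or hypothesis on the embedding $\mathcal C\hookrightarrow\mathcal G$. You have therefore not introduced a gap---you have located one already present in the paper and correctly diagnosed what would be needed to close it (a Tietze-type hypothesis, trivially satisfied in the examples). You also catch the minor inconsistency that $I_{\mathcal C}$ is defined in the paper using $C_c(\mathcal G)$ but then used as an ideal of $\mathcal A(\mathcal G)$; reading it as the admissible functions vanishing on $\mathcal C$, as you do, is the intended meaning.
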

\begin{proof}
The fact that $I_{\mathcal C}$ is an ideal of $\mathcal A (\mathcal G)$ follows from the fact that if $f_1\in I_{\mathcal C}$, then $f_1$ vanishes on $\mathcal C$, and hence $f_1\star f_2\vert_{\mathcal C}=0$, for all $f_2\in \mathcal A(\mathcal G)$.
\end{proof}

\begin{defn}Denote by  $(\mathcal A(\mathcal C))^{L_2}\subset \mathcal A(\mathcal C)$ the subspace of admissible functions on $\mathcal C$ that are constant along $L_2$, i.e. the subspace of functions $f \in \mathcal A(\mathcal C)$ satisfying $(h,k) \in L_2 
\Rightarrow f(h) = f(k)$.
\end{defn}
\begin{prop}\label{L2conv}
$(\mathcal A(\mathcal C)^{L_2},\star)$ is a subalgebra of $(\mathcal A (\mathcal C),\star)$.
\end{prop}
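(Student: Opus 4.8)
The plan is to show that $\mathcal{A}(\mathcal{C})^{L_2}$ is closed under the convolution product $\star$; linearity and the vector-space structure being evident, this is all that is required for it to be a subalgebra. So I would take two $L_2$-invariant admissible functions $f_1, f_2 \in \mathcal{A}(\mathcal{C})^{L_2}$ and verify that $f_1 \star f_2$ is again $L_2$-invariant, i.e. that $(g,g') \in L_2$ implies $(f_1 \star f_2)(g) = (f_1 \star f_2)(g')$.

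The key observation is that this is essentially already contained in Lemma~\ref{lem:L2conv}: there it is shown that for $L_2$-invariant admissible functions on $\mathcal{G}$, the convolution satisfies $f_1 \star f_2 = q^*(\tilde f_1 \star_{\underline{\mathcal{G}}} \tilde f_2)$ on $\mathcal{C}$. Since $q^*$ of any function on the quotient $\underline{\mathcal{G}}$ is automatically constant on $L_2$-fibers (as $q$ identifies $L_2$-equivalent points), this immediately gives that $f_1 \star f_2$ is $L_2$-invariant. The only subtlety is that Lemma~\ref{lem:L2conv} is phrased for functions in $\mathcal{A}(\mathcal{G})$ rather than $\mathcal{A}(\mathcal{C})$, but via the preceding proposition $\mathcal{A}(\mathcal{G})/I_{\mathcal{C}} \cong \mathcal{A}(\mathcal{C})$, so I would either lift $f_1, f_2$ to $L_2$-invariant representatives in $\mathcal{A}(\mathcal{G})$ (extending arbitrarily off $\mathcal{C}$, which does not affect the convolution on $\mathcal{C}$ since the support of any convolution lies in $\mathcal{C}$) or simply re-run the disintegration computation of Lemma~\ref{lem:L2conv} directly in $\mathcal{A}(\mathcal{C})$.

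Concretely, the steps in order are: first, recall that $\mathrm{supp}(f_1 \star f_2) \subset \mathcal{C}$ and that $f_1 \star f_2 \in \mathcal{A}(\mathcal{C})$ by the earlier proposition on closure of admissible functions under $\star$; second, use the disintegration axiom (iii) of Definition~\ref{def:RelHaarSys} to write $(f_1 \star f_2)(g)$ as an iterated integral over $\underline{\mathcal{G}}^{(2)}_{q(g)}$ of the fiber integrals of $f_1 f_2$; third, invoke $L_2$-invariance of $f_1, f_2$ to replace these fiber integrals of constants-along-fibers by the values $\tilde f_1(\underline{g}_1) \tilde f_2(\underline{g}_2)$, using that the disintegrating measures are probability measures; fourth, observe that the resulting expression depends on $g$ only through $q(g) = \underline{g}$, hence $f_1 \star f_2$ is constant on $L_2$-fibers. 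This shows $f_1 \star f_2 \in \mathcal{A}(\mathcal{C})^{L_2}$.

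I expect no serious obstacle here; the statement is a routine bookkeeping consequence of Lemma~\ref{lem:L2conv}. The only point requiring a little care is the interplay between working on $\mathcal{G}$ versus on $\mathcal{C}$ and making sure the disintegration of $\mu_g$ along $q_g$ is available for $g \in \mathcal{C}$ (which it is, by the definition of a relational Haar system, since $\mu_g = 0$ off $\mathcal{C}$ and axiom (iii) applies for $g \in \mathcal{C}$). One should also note that $\mathcal{A}(\mathcal{C})^{L_2}$ is nonempty and closed under scalar multiples and sums, but these are immediate. The cleanest write-up is probably just: "This is immediate from (the proof of) Lemma~\ref{lem:L2conv}, which shows that the convolution of $L_2$-invariant functions is the pullback along $q$ of a function on $\underline{\mathcal{G}}$, and such pullbacks are $L_2$-invariant."
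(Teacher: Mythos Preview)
Your proposal is correct and follows exactly the paper's approach: the paper's proof consists of the single sentence ``It follows directly from Lemma~\ref{lem:L2conv},'' and you have simply unpacked why that lemma suffices (the convolution of $L_2$-invariant functions is $q^*$ of a function on $\underline{\mathcal{G}}$, hence $L_2$-invariant). Your additional remarks about passing between $\mathcal{A}(\mathcal{G})$ and $\mathcal{A}(\mathcal{C})$ are sound and more careful than the paper itself.
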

\begin{proof}
It follows directly from Lemma \ref{lem:L2conv}.
\end{proof}
\begin{defn}[Reduced convolution algebra]
Let $\mathcal{G}$ be a relational groupoid. Its \emph{reduced convolution algebra} (with respect to $L_2$) is 
\begin{equation}
    \underline{\mathcal A}(\mathcal{G}) =\left(\mathcal A(\mathcal{G})/I_{\mathcal C}\right)^{L_2}\cong \mathcal A(\mathcal C)^{L_2}.
\end{equation}
\end{defn}
We are now ready to state our main result. 
\begin{thm}
The reduced convolution algebra of a relational groupoid is isomorphic to the (groupoid) convolution algebra of its reduction: 
\begin{equation}
    \underline{\mathcal A}(\mathcal{G}) \cong \mathcal A(\underline {\mathcal G}).
\end{equation}
\end{thm}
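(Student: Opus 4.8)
The plan is to realise the isomorphism explicitly as pullback along the quotient map. By definition $\underline{\mathcal A}(\mathcal{G}) = \bigl(\mathcal A(\mathcal{G})/I_{\mathcal C}\bigr)^{L_2} \cong \mathcal A(\mathcal C)^{L_2}$, so it suffices to produce an isomorphism of algebras $\Phi\colon \mathcal A(\underline{\mathcal G}) \to \mathcal A(\mathcal C)^{L_2}$, and I would take $\Phi(\tilde f) := q^*\tilde f = \tilde f\circ q$, where $q\colon \mathcal C \to \underline{\mathcal G}$ is the genuine, surjective quotient map provided by Proposition~\ref{prop:constraint} together with the reduction theorem. Since $\underline{\mathcal G}$ is an ordinary groupoid, its constraint set is all of $\underline{\mathcal G}$ and its associated quotient map is the identity, so $\mathcal A(\underline{\mathcal G})$ is just $C_c(\underline{\mathcal G})$ equipped with the ordinary groupoid convolution $\star_{\underline{\mathcal G}}$ attached to the Haar system $\nu$ of Definition~\ref{def:RelHaarSys}(ii).

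First I would verify that $\Phi$ is well defined, i.e.\ $q^*\tilde f \in \mathcal A(\mathcal C)^{L_2}$: the $L_2$-invariance holds because $q$ is constant on $L_2$-classes in $\mathcal C$; continuity of $q^*\tilde f$ is the universal property of the quotient topology on $\underline{\mathcal G}$; boundedness is inherited from $\tilde f$; and since $\mathrm{supp}(q^*\tilde f)\subseteq q^{-1}(\mathrm{supp}\,\tilde f)$ with $\mathrm{supp}\,\tilde f$ compact, admissibility holds with $\underline{\mathcal K}=\mathrm{supp}\,\tilde f$. Injectivity of $\Phi$ is immediate from surjectivity of $q$. For surjectivity, given $f\in\mathcal A(\mathcal C)^{L_2}$, the $L_2$-invariance means $f$ is constant on the fibres of $q$, hence factors uniquely as $f=\tilde f\circ q$; and $\tilde f$ lies in $\mathcal A(\underline{\mathcal G})=C_c(\underline{\mathcal G})$ because it is continuous (quotient topology again) and $\mathrm{supp}\,\tilde f$ is a closed subset of the compact set $\underline{\mathcal K}$ supplied by admissibility of $f$, hence compact (using that $\underline{\mathcal G}$ is Hausdorff, so compact sets are closed). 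Thus $\Phi$ is a linear bijection.

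It remains to check that $\Phi$ is multiplicative, i.e.\ $q^*\tilde f_1\star q^*\tilde f_2 = q^*(\tilde f_1\star_{\underline{\mathcal G}}\tilde f_2)$ on $\mathcal C$. This is essentially Lemma~\ref{lem:L2conv} combined with Proposition~\ref{L2conv}: given $f_i:=q^*\tilde f_i$, choose admissible extensions $\bar f_i\in\mathcal A(\mathcal G)$ (which exist since $\mathcal A(\mathcal G)/I_{\mathcal C}\cong\mathcal A(\mathcal C)$); because $L_2\subseteq\mathcal C\times\mathcal C$, the condition of $L_2$-invariance only constrains $\bar f_i$ on $\mathcal C$, so each $\bar f_i$ is automatically $L_2$-invariant and descends to $\tilde f_i$ on $\underline{\mathcal G}$. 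Lemma~\ref{lem:L2conv} then gives $(\bar f_1\star\bar f_2)|_{\mathcal C}=q^*(\tilde f_1\star_{\underline{\mathcal G}}\tilde f_2)$, and the left-hand side is by definition $\Phi(\tilde f_1)\star\Phi(\tilde f_2)$ in $\mathcal A(\mathcal C)$. Concretely, disintegrating $\mu_g$ over $q_g$ and using that $q^*\tilde f_i$ is constant along the fibres of $q$, the inner fibre integrals — each against a probability measure — collapse, leaving
\[
(q^*\tilde f_1\star q^*\tilde f_2)(g)=\int_{\underline{\mathcal G}^{(2)}_{q(g)}}\tilde f_1(\underline g_1)\tilde f_2(\underline g_2)\,d\nu_{q(g)}(\underline g_1,\underline g_2)=(\tilde f_1\star_{\underline{\mathcal G}}\tilde f_2)(q(g)).
\]

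Hence $\Phi$ is an isomorphism of algebras (and one checks similarly that it intertwines the $*$-operations). The algebra is not the hard part, since it is already packaged in Lemma~\ref{lem:L2conv} and Proposition~\ref{L2conv}; the genuine work is the point-set bookkeeping — that $q\colon\mathcal C\to\underline{\mathcal G}$ is surjective and compatible with the quotient topology and with taking supports, together with the small but essential observation that $L_2$-invariant admissible functions on $\mathcal C$ extend to $L_2$-invariant admissible functions on $\mathcal G$. This is where one must take care, in particular about the regularity hypotheses ($\underline{\mathcal G}$ Hausdorff and locally compact, measures Radon) under which all these identifications hold.
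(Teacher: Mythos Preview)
Your proposal is correct and follows essentially the same approach as the paper: the paper's proof is extremely terse, simply noting that $\underline{\mathcal G}=\mathcal C/L_2$ gives a bijection of the underlying sets and that multiplicativity follows from $L_2$-invariance via Proposition~\ref{L2conv} (which in turn rests on Lemma~\ref{lem:L2conv}). You have written out the inverse map $q^*$ instead of the descent map $f\mapsto\underline f$ and carefully supplied the point-set details (continuity, boundedness, support conditions, the observation that $L_2\subset\mathcal C\times\mathcal C$) that the paper leaves implicit, but the logical skeleton is identical.
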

\begin{proof}
Since $\underline {\mathcal G}= \mathcal C / L_2$, it follows that the sets $\underline{\mathcal A}(\mathcal{G})$ and $\mathcal A(\underline {\mathcal G})$ are in bijection. The fact that the map
\begin{eqnarray*}
\Phi: \underline{\mathcal A}(\mathcal{G}) &\to& \mathcal A(\underline {\mathcal G})\\
f&\mapsto& \underline f
\end{eqnarray*}
is an isomorphism of convolution algebras follows from $L_2$-invariance and Proposition \ref{L2conv}.
\end{proof}

\section{Future Directions}
\label{sec:Future_Directions}
\subsection{Relational $C^*$-algebras and higher structures.} A natural next step is to introduce the $C^*$-completion of the relational convolution algebra. In particular, we will study the relational analogue of the algebra $M(n,\mathbb C)$ of complex valued $n\times n$-matrices and the algebra $B(H)$ of bounded linear operators on a complex Hilbert space $H$. We will also study the representations of relational convolution algebras and how this construction relates to $C^*$-algebras for 2-groups and higher-categorical versions of relational groupoids.

\subsection{Field theory}\label{Field Theory} In a follow-up work, we intend to describe the groupoid convolution algebra for the relational symplectic groupoid obtained via the Poisson Sigma Model. This would provide a positive answer for the Guillemin--Sternberg conjecture for this particular 2-dimensional TFT. In particular, we will use Hawkins' approach to $C^*$-algebra quantization of symplectic groupoids \cite{Hawkins}, in the context of relational groupoids. 

More precisely, Hawkins developed a $C^*$-algebra quantization procedure that is compatible with the groupoid structure maps, as well as with the multiplicative symplectic structure on the space of morphisms. The construction is based on prescribing the following data (each built based on the previous ones):
\begin{enumerate}
    \item A symplectic groupoid $G\rightrightarrows M$ integrating a Poisson manifold $M$.
    \item A prequantum line bundle $(L,\nabla)$ over $G$.
    \item A symplectic groupoid polarization $\mathcal P$ of $G$.
    \item A ``half-form" bundle $\Omega_{\mathcal P}^{\frac{1}{2}}$.
    \item A twisted, polarized convolution algebra $\mathcal C^*_{\mathcal P}(G, \sigma)$ (where $\sigma$ encodes the twisting data).
\end{enumerate}
Our construction serves as a first step towards a field theoretic interpretation of the geometric quantization procedure introduced by Hawkins. In particular, we conjecture that the final object in step (5) can be obtained via transgression in the AKSZ formulation of the Poisson Sigma Model. A natural candidate for the relational Haar system comes from the path space construction, analogous to the Wiener measure on loop groups.
This interpretation will help connect the perturbative quantization of the Poisson Sigma Model via relational symplectic groupoids \cite{RSG_Quantization} and non perturbative approaches, i.e. geometric quantization 
\cite{PSM_GQ}. Such interpretation will be independent of the integrability of the Poisson manifold.

\subsection{Relational convolution and split relations} In \cite{Split}, split relations were studied in the context of relational symplectic groupoids. A relation is called \emph{split} if it is isotropic and it has a closed isotropic complement. It turns out that the structure relations of every finite-dimensional symplectic groupoid is split, as well as the infinite-dimensional relational symplectic groupoid obtained via the PSM. We will study the relationship between split relations and the convolution algebra for split relational groupoids.

\subsection{Relational convolution and Frobenius algebras} 
In \cite{Frobenius1}, it has been proven that special dagger Frobenius objects in \textbf{Rel} (the category of sets and relations) are in one-to-one
correspondence with groupoids. Recently \cite{Frobenius2}, this result has been generalized using a characterization of Frobenius objects in \textbf{Rel} using simplicial sets. 
Relational convolution algebras are natural examples of Frobenius objects in \textbf{Rel}, in the same way that group algebras are a special class of Frobenius algebras. We expect to have a simplicial interpretation of these extension in the future \cite{MehtaKeller}.

\appendix

\section{Some results on relational groupoids}

In this appendix we prove some results about the structure of the sets 
\[\calG^{(2)}_g =\{(h,k)\in \calG^{(2)}\mid (h,k,g) \in L_3\}.\]
Moreover, let $\calC$ be defined as in Proposition \ref{prop:constraint}. Then we have the following proposition.
\begin{prop}\label{prop:ginC}
If $\calG^{(2)}_g$ is nonempty then $g \in \calC$.
\end{prop}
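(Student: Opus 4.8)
The plan is to show that if $\calG^{(2)}_g \neq \emptyset$, then $g$ lies in the constraint set $\calC = L_2 \circ \mathcal G$. So suppose there exists a pair $(h,k) \in \calG^{(2)}_g$, meaning $(h,k,g) \in L_3$. I want to produce an element $g' \in \mathcal G$ with $(g',g) \in L_2$, since then $g \in L_2 \circ \mathcal G = \calC$ by definition; in fact it suffices to show $(g,g) \in L_2$, i.e. that $L_2$ is reflexive at $g$, but it is cleaner to exhibit $g$ as $L_2$-related to something in the image of $L_2$.

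The key mechanism is Axiom \textbf{\underline{A.6}}, in particular the invariance equation \eqref{invariance3}, $L_2 \circ L_3 = L_3$, together with the definition $L_2 := L_3 \circ (L_1 \times \id)$ from \textbf{\underline{A.6}} (equivalently $L_2 = L_3 \circ (\id \times L_1)$ by \eqref{eq:A6.1}). First I would observe: since $(h,k,g)\in L_3$ and $L_3 = L_2 \circ L_3$, there exists $g'' \in \mathcal G$ with $(h,k,g'') \in L_3$ and $(g'',g) \in L_2$. Then $g \in L_2 \circ \{g''\} \subset L_2 \circ \mathcal G = \calC$, which is exactly what we want. Actually this already closes the argument: the mere existence of \emph{any} element $g''$ with $(g'',g)\in L_2$ places $g$ in $\calC$, and the relation $L_2 \circ L_3 = L_3$ guarantees such a $g''$ exists whenever $g$ is in the image of $L_3$, i.e. whenever $\calG^{(2)}_g$ is nonempty.

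Let me make the one subtle point explicit, as I expect it to be the only place requiring care: the equation \eqref{invariance3} is an equality of relations $\mathcal G \times \mathcal G \nrightarrow \mathcal G$, so reading it at the pair $(h,k)$ gives $\{x : (h,k,x)\in L_2\circ L_3\} = \{x : (h,k,x) \in L_3\}$. Since $g$ belongs to the right-hand set, it belongs to the left-hand set, and unwinding the composition $L_2 \circ L_3$ means precisely: there is $g''$ with $(h,k,g'')\in L_3$ and $(g'',g)\in L_2$. The main obstacle is really just bookkeeping with the convention for composition of relations (which factor acts first) and confirming that \eqref{invariance3} is being applied in the correct direction; once that is pinned down, the conclusion $g \in L_2\circ\mathcal G = \calC$ is immediate. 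I would also remark that this shows more: every element of the image of $L_3$ lies in $\calC$, so $\calG^{(2)}_g\neq\emptyset \Rightarrow g\in\calC$ is the contrapositive-friendly statement that the convolution product is supported on $\calC$, which is how it is used in the body of the paper.
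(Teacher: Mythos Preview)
Your proof is correct and follows exactly the same approach as the paper: take $(h,k,g)\in L_3$, use the invariance $L_3 = L_2\circ L_3$ from \eqref{invariance3} to produce $g''$ with $(g'',g)\in L_2$, and conclude $g\in L_2\circ\mathcal G = \calC$. The extra discussion about bookkeeping and the interpretation for the convolution support is fine but not needed for the argument.
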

\begin{proof}
Let $(h,k) \in \calG^{(2)}_g$. Then $(h,k,g) \in L_3 = L_2 \circ L_3$ by $L_2$-invariance \eqref{invariance3}. It follows that there is $g'$ such that $(g',g) \in L_2$. In particular, $g \in L_2 \circ \mathcal{G} = \calC$. 
\end{proof}
For a usual groupoid $G\rightrightarrows M$, these sets are the fibers of the fibration $G^{(2)} \to G$. In a relational groupoid, this is no longer true. Instead we have the following statement. 
\begin{prop}
For two distinct elements $g,g'\in \mathcal G$, the sets $\mathcal G^{(2)}_g$ and $\mathcal G^{(2)}_{g'}$ are equal if and only if $(g,g') \in L_2$ and disjoint otherwise. 
\end{prop}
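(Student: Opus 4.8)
The plan is to prove the two implications separately and combine them; throughout, the interesting case is $g,g'\in\calC$, since for $g\notin\calC$ one has $\calG^{(2)}_g=\emptyset$ by Proposition~\ref{prop:ginC}. First I would dispatch the easy direction: if $(g,g')\in L_2$, then $\calG^{(2)}_g=\calG^{(2)}_{g'}$. Indeed, if $(h,k,g)\in L_3$, then since $(g,g')\in L_2$ the triple $(h,k,g')$ lies in $L_2\circ L_3$, which equals $L_3$ by the invariance relation~\eqref{invariance3}; hence $(h,k)\in\calG^{(2)}_{g'}$. As $L_2$ is symmetric~\eqref{symmetric}, running the same argument with $(g',g)$ gives the reverse inclusion. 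I would also record that this common fiber is nonempty for $g\in\calC$: by reflexivity of $L_2$ on $\calC$ (Proposition~\ref{prop:constraint}) we have $(g,g)\in L_2=L_3\circ(L_1\times\id)$, so $\calG^{(2)}_g$ contains a pair $(\ell,g)$ with $\ell\in L_1$.

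For the converse I would show that $\calG^{(2)}_g\cap\calG^{(2)}_{g'}\neq\emptyset$ forces $(g,g')\in L_2$. Fix $(h,k)$ in the intersection, so $(h,k,g)\in L_3$ and $(h,k,g')\in L_3$. The first step is a relational \emph{left-cancellation} identity, $(h,k,g)\in L_3\ \Rightarrow\ (I(h),g,k)\in L_3$, which I would prove by chasing definitions: from $(h,k,g)\in L_3=I\circ L$ one gets $(h,k,I(g))\in L$; axiom~\textbf{\underline{A.3}} (equation~\eqref{eq:A.3}) turns this into $(I(k),I(h),g)\in L$; cyclicity~\textbf{\underline{A.1}} then gives $(I(h),g,I(k))\in L$; and reading this back through $L_3=I\circ L$ together with involutivity~\textbf{\underline{A.2}} yields $(I(h),g,k)\in L_3$.

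The second step uses the associativity axiom~\textbf{\underline{A.4}}, i.e.\ the equality $L_3\circ(L_3\times\id)=L_3\circ(\id\times L_3)$ of relations $\calG\times\calG\times\calG\nrightarrow\calG$, evaluated on the triple $(h,I(h),g)$. Its right-hand side is $\{d\mid\exists f\mbox{ with }(I(h),g,f)\in L_3\mbox{ and }(h,f,d)\in L_3\}$; choosing $f=k$, which is legitimate by the cancellation identity, shows this set contains $\{d\mid(h,k,d)\in L_3\}$, and the latter contains both $g$ and $g'$. Hence $g'$ also lies in the left-hand side $\{d\mid\exists e\mbox{ with }(h,I(h),e)\in L_3\mbox{ and }(e,g,d)\in L_3\}$, so there is an $e$ with $(h,I(h),e)\in L_3$ and $(e,g,g')\in L_3$. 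But $(h,I(h),e)\in L_3$ forces $e\in L_1$ by the very definition $L_1=L_3\circ I$. Therefore $(e,g,g')\in L_3$ with $e\in L_1$, which is precisely the assertion $(g,g')\in L_3\circ(L_1\times\id)=L_2$.

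Finally I would assemble the pieces: for distinct $g,g'\in\calC$, either $(g,g')\in L_2$, in which case $\calG^{(2)}_g=\calG^{(2)}_{g'}$ by the first step, or $(g,g')\notin L_2$, in which case the contrapositive of the second and third steps gives $\calG^{(2)}_g\cap\calG^{(2)}_{g'}=\emptyset$; since the fibers are nonempty on $\calC$, equality of the fibers is equivalent to their intersection being nonempty, hence to $(g,g')\in L_2$. I expect the main obstacle to be the second step, namely finding the correct instance of associativity: the cancellation identity $(h,k,g)\in L_3\Rightarrow(I(h),g,k)\in L_3$ and the choice of the triple $(h,I(h),g)$ are exactly what conjure the $L_1$-element witnessing $(g,g')\in L_2$; everything else is routine manipulation of composites of relations.
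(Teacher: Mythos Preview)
Your proof is correct and follows essentially the same strategy as the paper's: establish the easy direction via $L_2\circ L_3=L_3$ and symmetry of $L_2$, and for the converse derive a cancellation identity from \textbf{\underline{A.1}}--\textbf{\underline{A.3}}, feed it into associativity \textbf{\underline{A.4}}, and read off an $L_1$-witness for $(g,g')\in L_2$. The only cosmetic difference is that the paper cancels $k$ (it proves $(h,k,g)\in L_3\Leftrightarrow(g,I(k),h)\in L_3$ and applies associativity to the triple $(g,I(k),k)$, landing on $(g,e,g')\in L_3$ with $e\in L_1$ via $L_2=L_3\circ(\id\times L_1)$), whereas you cancel $h$ (your identity $(h,k,g)\in L_3\Rightarrow(I(h),g,k)\in L_3$ and associativity on $(h,I(h),g)$ produce $(e,g,g')\in L_3$ with $e\in L_1$ via $L_2=L_3\circ(L_1\times\id)$); these are mirror images under the evident symmetry and neither buys anything over the other.
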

\begin{proof}
First, suppose that $(g,g') \in L_2$ and let $(h,k,g) \in L_3$. Then $(h,k,g') \in L_2 \circ L_3 = L_3.$  Hence $\mathcal G^{(2)}_g \subseteq \mathcal G^{(2)}_{g'}$, and symmetry of $L_2$ implies the other direction. 
Now, assume that $(h,k) \in \mathcal G^{(2)}_g \cap \mathcal G^{(2)}_{g'}$. It suffices to show that $(g,g') \in L_2$. Axioms \underline{\textbf{A.1}} and \underline{\textbf{A.3}} as in Definition \ref{defn:relational_groupoid} imply that $(h,k,g) \in L_3$ if and only if $(g,I(k),h) \in L_3$. Hence, 
$(g,I(k),k,g') \in L_3 \circ (L_3 \times \id) = L_3 \circ (\id \times L_3)$. Hence, there is $e \in \mathcal G$ such that $(I(k),k,e) \in L_3$ and $(g,e,g') \in L_3$, by definition of $L_1$ and $L_2$, we conclude that $e \in L_1$ and hence $(g,g') \in L_2$.
\end{proof}
An immediate corollary is the following.
\begin{cor} Let $(h,k) \in \mathcal G^{(2)}_g$, then the set $hk := \{g' \in \mathcal G\mid (h,k,g') \in L_3\}$ satisfies $hk = L_2(g)$. In particular, we have 
\begin{equation}
\mathcal G^{(2)}_{hk} =  \bigcup_{g' \in hk}\mathcal G^{(2)}_{g'} = \mathcal G^{(2)}_g.
\end{equation} 
\end{cor}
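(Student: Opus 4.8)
The plan is to prove the corollary in two movements: first identify the set $hk$ with the $L_2$-class of $g$, and then use that identification together with the previous proposition to compute $\calG^{(2)}_{hk}$.

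First I would fix $(h,k) \in \calG^{(2)}_g$, so by definition $(h,k,g) \in L_3$, and show that $hk = L_2(g)$, where $L_2(g) := \{g' \mid (g,g') \in L_2\}$. For one inclusion, suppose $g' \in hk$, i.e. $(h,k,g') \in L_3$. Then $(h,k)$ lies in both $\calG^{(2)}_g$ and $\calG^{(2)}_{g'}$, so these two fiber-like sets have nonempty intersection; by the preceding proposition (the one characterizing when $\calG^{(2)}_g = \calG^{(2)}_{g'}$), either $g = g'$ or $(g,g') \in L_2$. Since $L_2$ is reflexive on $\calC$ and $g \in \calC$ by Proposition \ref{prop:ginC} (as $\calG^{(2)}_g \neq \emptyset$), in either case $g' \in L_2(g)$. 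Conversely, if $g' \in L_2(g)$, then $(h,k,g) \in L_3$ together with $L_2$-invariance $L_3 = L_2 \circ L_3$ (Axiom \textbf{\underline{A.6}}, Equation \eqref{invariance3}) — more precisely, applying $L_2$ to the last slot — gives $(h,k,g') \in L_3$, so $g' \in hk$. Hence $hk = L_2(g)$.

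Next I would compute $\calG^{(2)}_{hk}$. Using $hk = L_2(g)$, we have $\bigcup_{g' \in hk}\calG^{(2)}_{g'} = \bigcup_{g' \in L_2(g)}\calG^{(2)}_{g'}$. For every $g'$ with $(g,g') \in L_2$, the preceding proposition gives $\calG^{(2)}_{g'} = \calG^{(2)}_g$ (equal sets, since $(g,g')\in L_2$). Therefore the union collapses to the single set $\calG^{(2)}_g$, which yields $\calG^{(2)}_{hk} = \calG^{(2)}_g$ as claimed.

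I do not expect a serious obstacle here; the corollary is essentially a repackaging of the previous proposition. The only point requiring a little care is the direction $g' \in L_2(g) \Rightarrow g' \in hk$: one must invoke symmetry of $L_2$ (Equation \eqref{symmetric}) together with $L_3 = L_2 \circ L_3$ in the correct order, since $L_2$ is being composed on the codomain side of $L_3$, not the domain side. Everything else follows formally from $L_2$-invariance and the reflexivity of $L_2$ on $\calC$ established in Proposition \ref{prop:constraint} and Proposition \ref{prop:ginC}.
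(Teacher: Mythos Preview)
Your proof is correct and is precisely the argument the paper leaves implicit: the corollary is stated there without proof as ``an immediate corollary'' of the preceding proposition, and your two steps (identifying $hk$ with $L_2(g)$ via nonempty intersection of the $\calG^{(2)}$-fibers, then collapsing the union using $\calG^{(2)}_{g'} = \calG^{(2)}_g$ for $(g,g')\in L_2$) are exactly how one unpacks that. The only minor remark is that in the converse inclusion you do not actually need to invoke symmetry of $L_2$: from $(h,k,g)\in L_3$ and $(g,g')\in L_2$ one gets $(h,k,g')\in L_2\circ L_3 = L_3$ directly, so your caveat about Equation~\eqref{symmetric} can be dropped.
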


Analogous to the right relational action, we can define left relational actions. In particular, a relational groupoid acts on itself by left and right multiplication and the sets $\mathcal G^{(2)}_g$ behave nicely under this action. 
\begin{prop}
\label{prop:Left_right_relational_action}
Let $(\mathcal G,L,I)$ be a relational groupoid and suppose that $(g,h) \in \mathcal G^{(2)}$. Then, we have 
\begin{equation}
    (\id \times R_h) \circ \mathcal G^{(2)}_g = \mathcal G^{(2)}_{gh}, 
\end{equation}
and similarly 
\begin{equation}
    (L_g \times \id) \circ \mathcal G^{(2)}_h = \mathcal G^{(2)}_{gh}, 
\end{equation}
\end{prop}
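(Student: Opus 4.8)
The plan is to prove the two identities by unraveling the definitions of the relational actions $R_h$ and $L_g$ in terms of the relation $L_3$, and then invoking the associativity axiom \textbf{\underline{A.4}} together with the invariance properties of $L_2$ from \textbf{\underline{A.6}}. Recall that for $(g,h) \in \mathcal{G}^{(2)}$ the set $gh = \{x \mid (g,h,x) \in L_3\}$ is nonempty, and by the Corollary above it equals $L_2(g')$ for any choice of composite $g' \in gh$; in particular the notation $\mathcal{G}^{(2)}_{gh}$ is unambiguous by the preceding proposition, since all elements of $gh$ lie in a single $L_2$-class.

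First I would spell out the left-hand side of the first identity. An element of $(\id \times R_h) \circ \mathcal{G}^{(2)}_g$ is a pair $(a,b)$ for which there exists a pair $(a,c) \in \mathcal{G}^{(2)}_g$, i.e. $(a,c,g) \in L_3$, together with $(c,b) \in R_h$, i.e. $(c,h,b) \in L_3$ (here $R_h$ is the right-multiplication relation coming from the action $\rho = L_3$ of $\mathcal{G}$ on itself). So $(a,b) \in (\id \times R_h)\circ \mathcal{G}^{(2)}_g$ iff there is some $c$ with $(a,c,g)\in L_3$ and $(c,h,b) \in L_3$. Using \textbf{\underline{A.1}} and \textbf{\underline{A.3}} (as already exploited in the proof of the previous proposition, these let one rewrite $(a,c,g) \in L_3$ as $(g, I(c), a) \in L_3$, equivalently $(I(g), I(c), I(a)) \in$ the appropriate relation), I would recast this condition as a statement about $L_3 \circ (L_3 \times \id)$ applied to the triple $(a, h, \cdot)$ relative to $g$. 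Then \textbf{\underline{A.4}}, $L_3 \circ (L_3 \times \id) = L_3 \circ (\id \times L_3)$, converts the existence of the intermediate element $c$ into the condition $(a, b') \in \mathcal{G}^{(2)}_g$ for an element $b'$ with $(a,b) \in$ something $\ldots$ — more directly, it should collapse to: there exists $d$ with $(h, \text{(something)}, d)$ and $(a, d, ?)$, ultimately yielding $(a,b,gh)$ up to the $L_2$-slack that \textbf{\underline{A.6}} \eqref{invariance3} absorbs since $L_3 = L_2 \circ L_3 = L_3 \circ (L_2 \times L_2)$. The cleanest route is probably to argue both inclusions separately: $\subseteq$ directly from $\id \times L_3$ after reassociating, and $\supseteq$ by reversing the same chain. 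The second identity, with $L_g \times \id$ and the left action, follows by the mirror-image argument, using the symmetric form of \textbf{\underline{A.4}} and the relational left action $\rho'$ with its associativity; alternatively one can derive it from the first by applying $I$ and the axiom \textbf{\underline{A.3}} relating $I$ to $L_3$ under transposition, since $I$ intertwines left and right multiplication.

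The main obstacle I anticipate is bookkeeping: the relations $R_h$, $L_g$, and $\mathcal{G}^{(2)}_g$ are each defined by an existential quantifier over an auxiliary element of $\mathcal{G}$, and composing them introduces a second quantifier, so the heart of the argument is matching the two nested existentials against the two sides of the associativity identity \eqref{asso} with the correct order of arguments and the correct insertions of $I$. Because $L_3$ is not the graph of a function, one must be careful that the "output" of one composition, which is really a whole $L_2$-class rather than a point, is fed correctly into the next, and here the identities $L_3 = L_2 \circ L_3 = L_3\circ(L_2\times L_2)$ from \eqref{invariance3} do the work of showing the answer does not depend on which representative is chosen — this is exactly why the statement is phrased with the unambiguous symbol $\mathcal{G}^{(2)}_{gh}$ and why the Corollary preceding it is the key lemma. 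Modulo this diagram-chasing, no analysis or additional structure is needed; everything reduces to the combinatorics of $L_3$ and the axioms \textbf{\underline{A.1}}, \textbf{\underline{A.3}}, \textbf{\underline{A.4}}, \textbf{\underline{A.6}}.
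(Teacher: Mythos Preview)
Your approach is correct and essentially the same as the paper's. The paper resolves the bookkeeping you anticipate by first rewriting $(c,b)\in R_h$ as $(b,I(h),c)\in L_3$ (equivalently $(R_h)^T=R_{I(h)}$), so that the two conditions $\exists c\colon (a,c,g)\in L_3,\ (b,I(h),c)\in L_3$ read directly as $(a,b,I(h),g)\in L_3\circ(\id\times L_3)=L_3\circ(L_3\times\id)$ by \textbf{\underline{A.4}}, after which one more application of the \textbf{\underline{A.1}}/\textbf{\underline{A.3}} identity turns $(k,I(h),g)\in L_3$ into $(g,h,k)\in L_3$; no explicit appeal to \textbf{\underline{A.6}} is needed beyond what is already packaged in the preceding Corollary defining $\mathcal G^{(2)}_{gh}$.
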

\begin{proof}
For the first equation, we have 
\begin{align*} 
(\id \times R_h)  \circ \mathcal G^{(2)}_g &=  \{(g_1,g_2)\mid \exists k, (g_1,k) \in G^{(2)}_g, (k,g_2) \in R_h,\} \\
&=  \{(g_1,g_2)\mid \exists k, (g_1,k) \in \mathcal G^{(2)}_g, (g_2,k) \in (R_h)^T = R_{I(h)},\} \\
&=  \{(g_1,g_2)\mid \exists k, (g_1,k,g) \in L_3, (g_2,I(h),k) \in L_3\} \\
&=  \{(g_1,g_2)\mid (g_1,g_2,I(h),g) \in L_3 \circ (\id \times L_3)\} \\
&=  \{(g_1,g_2)\mid (g_1,g_2,I(h),g) \in L_3 \circ (L_3 \times id)\} \\
&=  \{(g_1,g_2)\mid \exists k \in \mathcal G\colon(g_1,g_2,k) \in L_3, (k,I(h),g) \in L_3\} \\
&=  \{(g_1,g_2)\mid \exists k \in \mathcal G\colon(g_1,g_2,k) \in L_3, (k,I(h),I(g)) \in L \} \\
&=  \{(g_1,g_2)\mid \exists k \in \mathcal G\colon(g_1,g_2,k) \in L_3, (I(h),I(g),k) \in L \} \\
&=  \{(g_1,g_2)\mid \exists k \in \mathcal G\colon(g_1,g_2,k) \in L_3, (I(h),I(g),I(k)) \in L_3 \} \\
&=  \{(g_1,g_2)\mid \exists k \in \mathcal G\colon(g_1,g_2,k) \in L_3, (g,h,k) \in L_3 \} \\
&=  \mathcal G^{(2)}_{gh}
\end{align*}
The other equation is proven similarly.
\end{proof}
In particular, we have the following. 
\begin{cor}
Let $(\mathcal G,L,I)$ be a relational groupoid and suppose that $(g,h) \in \mathcal G^{(2)}$. Then 
\begin{equation}
    (L_{I(g)} \times R_h) \circ \mathcal G^{(2)}_g = \mathcal G^{(2)}_h
\end{equation}
\end{cor}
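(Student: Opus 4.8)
The plan is to factor the relation on the left into ``one slot at a time'' and apply Proposition~\ref{prop:Left_right_relational_action} twice; the second application plays the role of the relational cancellation identity $I(g)\cdot(g\cdot h)=L_2(h)$, the analogue of the group law $g^{-1}(gh)=h$.

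First I would use the elementary factorization of relations
\[(L_{I(g)}\times R_h)=(L_{I(g)}\times\id)\circ(\id\times R_h)\colon\mathcal G\times\mathcal G\nrightarrow\mathcal G\times\mathcal G,\]
so that, reading $\mathcal G^{(2)}_g$ as a relation $\ast\nrightarrow\mathcal G\times\mathcal G$ and invoking the first identity of Proposition~\ref{prop:Left_right_relational_action},
\[(L_{I(g)}\times R_h)\circ\mathcal G^{(2)}_g=(L_{I(g)}\times\id)\circ\bigl((\id\times R_h)\circ\mathcal G^{(2)}_g\bigr)=(L_{I(g)}\times\id)\circ\mathcal G^{(2)}_{gh}.\]
Here $gh=\{x\mid(g,h,x)\in L_3\}$ is nonempty because $(g,h)\in\mathcal G^{(2)}$, and $\mathcal G^{(2)}_{gh}:=\bigcup_{x\in gh}\mathcal G^{(2)}_x$ is unambiguous since $\mathcal G^{(2)}_x$ depends only on the $L_2$-class of $x$ (by the proposition and corollary established earlier in this appendix). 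I would then apply the second identity of Proposition~\ref{prop:Left_right_relational_action} with $g$ replaced by $I(g)$ to each $\mathcal G^{(2)}_x$, $x\in gh$. This is legitimate: the rotation rule $(a,b,c)\in L_3\Leftrightarrow(I(a),c,b)\in L_3$ (a consequence of \textbf{\underline{A.1}} and \textbf{\underline{A.3}} already used above) turns $(g,h,x)\in L_3$ into $(I(g),x,h)\in L_3$, so $(I(g),x)\in\mathcal G^{(2)}$ and moreover $h\in I(g)\cdot x$. Taking the union over $x\in gh$ and distributing composition over unions gives
\[(L_{I(g)}\times\id)\circ\mathcal G^{(2)}_{gh}=\mathcal G^{(2)}_{I(g)\cdot(gh)},\qquad I(g)\cdot(gh):=\bigcup_{x\in gh}\{y\mid(I(g),x,y)\in L_3\}.\]

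It then remains to show that $I(g)\cdot(gh)$ is a nonempty subset of the single $L_2$-class $L_2(h)\subseteq\calC$, which yields $\mathcal G^{(2)}_{I(g)\cdot(gh)}=\mathcal G^{(2)}_h$ and finishes the proof. Nonemptiness was already noted ($h\in I(g)\cdot x$ for any $x\in gh$), and $h\in\calC$ follows from $L_3=L_3\circ(L_2\times L_2)$ (Axiom~\textbf{\underline{A.6}}, \eqref{invariance3}) applied to $(g,h,x)\in L_3$. For the inclusion, associativity \textbf{\underline{A.4}} gives $I(g)\cdot(gh)=(I(g)\cdot g)\cdot h$; the rotation rule gives $(I(g),g,z)\in L_3\Leftrightarrow(g,z,g)\in L_3$, and $(g,z,g)\in L_3$ forces $z\in L_1$ (this is essentially the characterization of $L_1$ used in the proof that $\underline{\calG}$ is a groupoid), so $I(g)\cdot g\subseteq L_1$ and hence $(I(g)\cdot g)\cdot h\subseteq L_1\cdot h=L_2(h)$ by the defining identity $L_2=L_3\circ(L_1\times\id)$ from \textbf{\underline{A.6}}.

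I expect the only mildly delicate points to be the bookkeeping forced by ``$gh$'' being an $L_2$-class rather than a single element (so that several identities only hold up to $L_2$), and the cancellation step $I(g)\cdot g\subseteq L_1$; everything else is formal manipulation of relations built on Proposition~\ref{prop:Left_right_relational_action}.
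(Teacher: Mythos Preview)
Your proof is correct and follows essentially the same route as the paper: apply Proposition~\ref{prop:Left_right_relational_action} twice to get $(L_{I(g)}\times R_h)\circ\mathcal G^{(2)}_g=\mathcal G^{(2)}_{I(g)gh}$, then use that $I(g)gh\subseteq L_2(h)$ together with the $L_2$-invariance of the sets $\mathcal G^{(2)}_\bullet$. You are in fact more careful than the paper, verifying explicitly that $(I(g),x)\in\mathcal G^{(2)}$ so the second application is legitimate and that $I(g)\cdot g\subseteq L_1$; note for the latter that your detour through $(g,z,g)\in L_3$ is unnecessary, since $(I(g),g,z)\in L_3$ already gives $z\in L_1$ directly from the definition $L_1=L_3\circ I$.
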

\begin{proof}
By Proposition \ref{prop:Left_right_relational_action}, we have $(L_{I(g)} \times R_h) \circ \mathcal G^{(2)}_g = \mathcal G^{(2)}_{I(g)gh}.$ By definition of $L_2$ this is can be rewritten as $\mathcal G^{(2)}_{I(g)gh}= \mathcal G^{(2)}_{L_2(h)}$. We conclude the proof using Proposition \ref{prop:Left_right_relational_action}. 
\end{proof}

\end{document}